\documentclass[conference, 11pt, onecolumn]{ieeeconf}

\IEEEoverridecommandlockouts
\overrideIEEEmargins

\usepackage[usenames]{color}
\usepackage{enumerate}
\usepackage{url}
\usepackage{subfig}
\usepackage{amsfonts,mathrsfs}
\usepackage{amssymb,amsmath}
\usepackage{verbatim}
\usepackage{acronym}
\usepackage{mathtools}
\usepackage{thmtools,thm-restate}
\usepackage{cite}
\usepackage{graphicx}
\usepackage{algorithm}
\usepackage[noend]{algpseudocode}

%\newcommand{\remove}[1]{}
%\def\jnt#1{{{\bf #1}}}

%\long\def\red#1{{{\color{red}#1}}}

\def\fskip#1{}

\def\E{\mathbb{E}}
\def\P{\mathbb{P}}
\def\R{\mathbb{R}}
\def\N{\mathbb{N}}

\def\F{\mathcal{F}}
\def\H{\mathcal{H}}
\def\G{\mathcal{G}}
\def\V{\mathcal{V}}
\def\O{\mathcal{O}}
\def\CP{\mathcal{P}}

\def\hq{\hat{q}}

\newtheorem{theorem}{Theorem}

\newtheorem{assumption}{Assumption}

\newtheorem{definition}{Definition}

\newtheorem{lemma}{Lemma}

\newtheorem{proposition}{Proposition}

\def\la{\langle}
\def\ra{\rangle}

\def\argmax{\mathop{\rm argmax}}
\def\bs{\boldsymbol}

\begin{document}
\title{Scalable and Independent Learning of Nash Equilibrium Policies in
$n$-Player Stochastic Games with Unknown Independent Chains}
\author{\authorblockN{Tiancheng Qin and S. Rasoul Etesami}
 \authorblockA{\vspace{-1.5cm}
}
\thanks{*The authors are with the Department of Industrial and Systems Engineering, Department of Electrical and Computer Engineering, and Coordinated Science Lab, University of Illinois Urbana-Champaign, Urbana, IL, USA 61801. Email: (tq6, etesami1)@illinois.edu. This material is supported by the Air Force Office of Scientific Research under award number FA9550-23-1-0107 and the NSF
CAREER Award under Grant No. EPCN-1944403.}
}

\maketitle
\begin{abstract}
We study a subclass of $n$-player stochastic games, namely, stochastic games with independent chains and unknown transition matrices. In this class of games, players control their own internal Markov chains whose transitions do not depend on the states/actions of other players. However, players' decisions are coupled through their payoff functions. We assume players can receive only realizations of their payoffs, and that the players can not observe the states and actions of other players, nor do they know the transition probability matrices of their own Markov chain. Relying on a compact dual formulation of the game based on occupancy measures and the technique of \emph{confidence set} to maintain high-probability estimates of the unknown transition matrices, we propose a fully decentralized mirror descent algorithm to learn an $\epsilon$-NE for this class of games. The proposed algorithm has the desired properties of \emph{independence}, \emph{scalability}, and \emph{convergence}. Specifically, under no assumptions on the reward functions, we show the proposed algorithm converges in polynomial time in a weaker distance (namely, the averaged Nikaido-Isoda gap) to the set of $\epsilon$-NE policies with arbitrarily high probability. Moreover, assuming the existence of a \emph{variationally stable} Nash equilibrium policy, we show that the proposed algorithm converges asymptotically to the stable $\epsilon$-NE policy with arbitrarily high probability. In addition to Markov potential games and linear-quadratic stochastic games, this work provides another subclass of $n$-player stochastic games that, under some mild assumptions, admit polynomial-time learning algorithms for finding their stationary $\epsilon$-NE policies.  
\end{abstract}
\begin{keywords}
 Stochastic games, independent learning, stationary Nash equilibrium policy, occupancy measure, online mirror descent, upper confidence set, Nikaido-Isoda gap function, variational stability.
\end{keywords}

%%%%%%%%%%%%%%%%%%%%%%%
\section{Introduction}\label{sec:intro}

Learning of equilibrium points in noncooperative games is a fundamental problem that has emerged in many disciplines, such as control and game theory, operations research, and computer science \cite{zhang2021multi,cesa2006prediction,daskalakis2021independent}. Broadly speaking, in a strategic multiagent decision-making system, one immediate goal for the agents (players) is to adaptively make decisions in order to optimize their own payoff functions. However, since the players' payoff functions may be misaligned, their decisions may not result in a Nash equilibrium (NE) -- a stable outcome in which each player's decision optimizes its own payoff given the fixed decisions of the others. Therefore, a major question is whether the players can intelligently learn how to update their strategies through interactions so that their collective decisions converge to a NE. Moreover, the convergence time to a NE should be short and scale polynomially in terms of the game parameters. In addition, since players are often selfish and interact in a fully competitive environment, their decisions must be made independently and without coordination. In other words, each player must make decisions only by observing its realized payoff and without knowing others' decisions/payoffs.\footnote{This information setting is often referred to as ``bandit information feedback" in the game-theoretic literature.} Therefore, \emph{convergence}, \emph{scalability}, and \emph{independence} are three major challenges for efficient and independent learning of NE points in large-scale noncooperative games.  

Typically, efficient and independent learning of NE is challenging, and it is known that computing NE is PPAD-hard \cite{daskalakis2009complexity} for general-sum games. The learning task is even more complex for stochastic dynamic games \cite{shapley1953stochastic,bacsar1998dynamic} where the existence of state dynamics introduces additional nonstationarity to the environment. While it is hopeless to provide a general learning scheme for all noncooperative games, it has been shown in the past literature that efficient and independent learning of NE points is still possible for special structured stochastic games such as two-player zero-sum stochastic games \cite{zhao2021provably,qiu2021provably}, mean-field and aggregative stochastic games \cite{uz2020reinforcement,meigs2019learning}, and $n$-player Markov potential games \cite{zhang2021gradient,leonardos2021global}. Expanding upon the existing literature, in this work, we study a subclass of noncooperative stochastic games, namely, stochastic games with independent chains and unknown transition matrices \cite{altman2007constrained,etesami2022learning}, and our goal is to provide scalable and independent learning algorithms for NE points. In this class of games, a set of $n$ players, each with its own finite state and action space, controls its own Markov chain, whose transition does not depend on the states/actions of other players. However, the players are coupled through their payoff functions, which depend on the states and actions of all players. We also assume that the players can not observe the states and actions of other players, nor do they know the transition probability matrices of their own Markov chain. There are many interesting real-world problems that fit into this subclass of stochastic games. The following are only two motivating examples.

\smallskip
\noindent
{\bf I) Energy Management in Smart Grids:} Consider an energy market with one utility company and a set $[n]=\{1,\ldots,n\}$ of users (players), which can both produce and consume energy (see Figure \ref{Fig:energy}). Each player generates energy using its solar panel or wind turbine and is equipped with a storage device that can store the remaining energy at the end of each day $t\in \mathbb{Z}_+$. Let $s^t_i$ denote the (quantized) amount of stored energy of player $i$ at the beginning of day $t$ with maximum storage capacity $C$. Moreover, let $g_i^t$ be a random variable denoting the amount of harvested energy for player $i$ at the end of day $t$, whose distribution is determined by the weather conditions on that day. Now if we use $a_i^t$ to denote the total amount of energy consumed by player $i$ during day $t$, then the stored energy at the end of day $t$ (or the beginning of day $t+1$) is given by $s_i^{t+1}=\min\{C, g_i^t+(s_i^t-a_i^t)^+\}$, where $(s_i^t-a_i^t)^+=\max\{0, s_i^t-a_i^t\}$. In particular, player $i$ needs to purchase $(a_i^t-s_i^t)^+$ units of energy from the utility company on day $t$ to satisfy its demand on that day. On the other hand, the utility company sets the energy price as a function of total demands $\{(a_i^t-s_i^t)^+, i\in [n]\}$, which is given by $p(a^t,s^t)$. If $u_{i}(a^t_i)$ denotes the utility that player $i$ derives by consuming $a_i^t$ units of energy, then the reward of player $i$ at time $t$ is given by $r_i(a^t,s^t)=u_{i}(a^t_i)-p(a^t,s^t)\times(a_i^t-s_i^t)^+$. In particular, if players are at distant locations, they likely experience independent weather conditions, so their transition probability models that are governed by stochasticity of $\{g^t_i, i\in [n]\}$ will be independent. In this game, players want to adopt consumption policies to maximize their aggregate rewards despite not being able to observe others' states/actions. Therefore, if the players can learn their NE policies quickly, the whole energy market will stabilize, eventually benefiting both the utility company and the players \cite{etesami2018stochastic,etesami2022learning}. 

\begin{figure}[!tbp]
	\vspace{-0.4cm}
	\centering
	\begin{minipage}[t]{0.6\textwidth}
		\includegraphics[width=\textwidth]{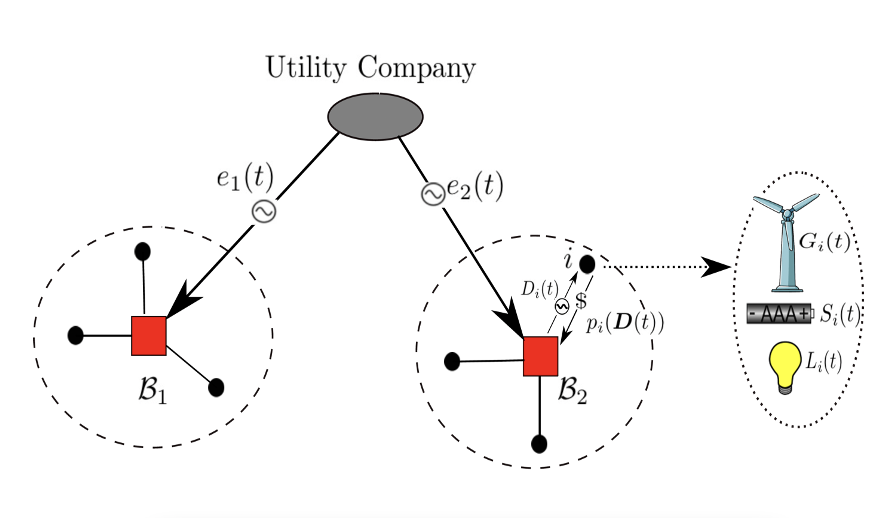}
		\caption{\footnotesize{Energy management in smart grids: Users interact with the utility company and each other to learn their equilibrium energy consumption/production levels.}}\label{Fig:energy}
	\end{minipage}
	\hfill
	\begin{minipage}[t]{0.35\textwidth}
		\includegraphics[width=\textwidth]{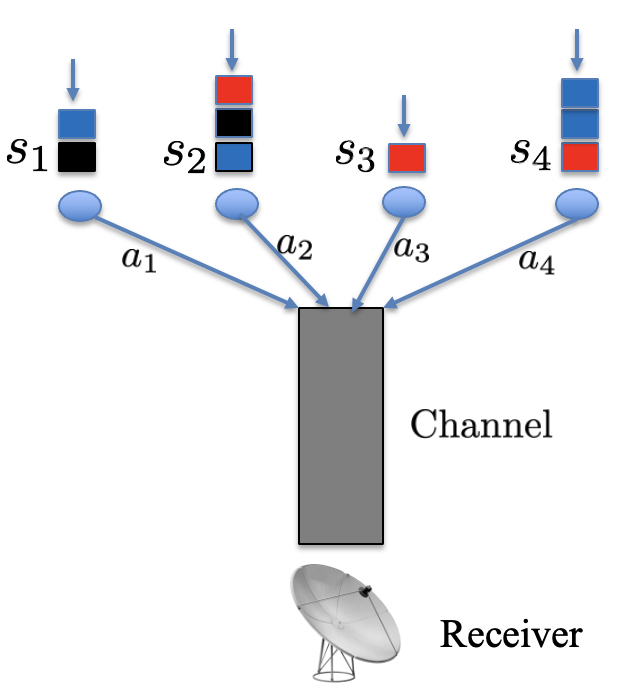}
		\caption{\footnotesize{Multi-agent wireless communication: Senders adjust their transmission powers (action) by observing their payoffs/local queues (state).}}\label{Fig:wireless}
	\end{minipage}
\end{figure}

\smallskip
\noindent
{\bf II) Power/Bandwidth Allocation in Multi-agent Wireless Networks:} As another application of the stochastic games we study in this work one can consider the design of multi-agent wireless networking algorithms for power or bandit allocation  \cite{altman2007constrained,narayanan2017large,altman2008constrained}. More precisely, consider a set $[n]$ of players (users) that share the same wireless communication channel (see Figure \ref{Fig:wireless}). Each user $i$ has a queue length $s_i$ (state) of arriving packets that are generated from a local source, which is independent of users' generating sources. Each user $i$ observes its queue length and chooses its transmission power $a_i$ (action) to send a message through the channel. The reward of each player $r_i(\cdot)$ (e.g., the probability of successful transmission) depends on the transmission power of all other users, their queue lengths, and the shared channel environment (e.g., temperature, white noise, capacity). This setting can again be formulated in the form of the stochastic game described above, in which the users want to learn their power transmission NE policies despite not knowing the transition probabilities of their local generating sources, while they only get to observe the reward of successful transmissions.

Considering the above and many other similar examples, such as multi-agent robotic navigation \cite{zhang2021multi}, our objective in this work is to extend the existing literature on learning in games and develop scalable and independent learning algorithms for obtaining NE in a subclass of stochastic games with independent chains and unknown transition matrices. Specific goals of this work are i) to design decentralized and uncoordinated strategies for learning NE points (independency); ii) to analyze the convergence of the devised learning algorithms under limited information feedback environments (convergence); and iii) to devise polynomial-time learning algorithms for computing or approximating NE strategies (scalability).

\subsection{Background and Related Work}
Since the early work on the existence of a mixed-strategy Nash equilibrium in static noncooperative
games \cite{nash1950equilibrium}, and its extension to the existence of stationary NE policies in dynamic stochastic games \cite{shapley1953stochastic}, substantial research has been done to develop scalable algorithms for computing NE points in static and dynamic environments \cite{cesa2006prediction}. In general, computing NE is PPAD-hard \cite{daskalakis2009complexity}, and it is unlikely to admit a polynomial-time algorithm. To overcome this fundamental barrier, one main approach that has been adopted in the past literature is to search for NE points in special structured games, such as \emph{potential games} or \emph{concave games}. However, such results have mainly been developed for static games, in which players repeatedly play the same game and gradually learn the underlying stationary environment. Unfortunately, an extension of such results to dynamic stochastic games \cite{shapley1953stochastic,basar1999dynamic}, in which the state of the game evolves as a result of players' past decisions and the realizations of a stochastic nature, imposes major challenges.

For dynamic stochastic games, the prior work has largely focused on the special case of two-player zero-sum stochastic games \cite{zhao2021provably,qiu2021provably,zhang2021multi,tian2021online,sayin2021decentralized,sayin2022fictitious,wei2021last}. For instance, \cite{daskalakis2021independent} provided a finite-sample NE convergence result for independent policy gradient methods in two-player zero-sum stochastic games without coordination. While two-player zero-sum stochastic games constitute an important basic setting, there are many problems with a large number of players, a situation that hinders the applicability of the existing algorithms for computing a stationary NE. To address this issue, researchers have recently developed learning algorithms for finding NE in special structured stochastic games, e.g., mean-field and aggregative stochastic games\cite{zhang2021multi,uz2020reinforcement,meigs2019learning}. The main underlying assumption in mean-field games and aggregative games is that the individual actions of the players do not play a major role in the evolution of the state dynamics, but rather that the mean/aggregate of their actions is the driving force of the dynamics. Such an approach may simplify the learning task by allowing the players to focus on learning the mean-field trajectory of the actions/states rather than individual actions/states.. \cite{hambly2021policy} shows that policy gradient methods can find a NE in $n$-player general-sum linear-quadratic games, and an application of reinforcement learning for finding NE in linear-quadratic mean-field games has been studied in \cite{uz2020reinforcement}. Moreover, \cite{zhang2021gradient,leonardos2021global} show that $n$-player Markov potential games, an extension of static potential games to dynamic stochastic games, admit polynomial-time algorithms for computing their NE policies. Unfortunately, the class of Markov potential games is very restrictive because it requires strong assumptions on the existence of a general potential function. In fact, even establishing the existence of such a potential function could be a challenging task \cite{macua2018learning,mguni2021learning}. 

This work is also closely related to the large body of literature on reinforcement learning (RL), especially works on RL for adversarial Markov Decision Processes (MDPs) \cite{zhang2021multi,lee2020bias}. MDPs are general frameworks that can model many real-world decision-making problems in the face of uncertainties \cite{chen2021primal,zhang2021multi,jin2020efficiently}, and these works have raised substantial interest in developing efficient learning algorithms for computing optimal stationary policies in single-agent MDPs \cite{agarwal2021theory,wang2017primal,cardoso2019large}. In the problem of learning adversarial MDPs, where the losses can change arbitrarily between episodes, the goal of a learning algorithm is typically to achieve a small \emph{regret} compared to the best fixed policy in hindsight. In the work, we borrow techniques such as \emph{occupancy measure} and \emph{confidence set} \cite{jin2020learning,rosenberg2019online} from the literature on learning adversarial MDPs.

There has been a line of prior research on the study of decentralized stochastic games with independent chains \cite{altman2007constrained,singh2014characterization,qiu2021provably,zhang2022constrained,etesami2022learning}. Specifically, \cite{altman2007constrained} showed the existence of a NE for the class of stochastic games with independent chains.\footnote{They called it constrained cost-coupled stochastic games with independent state processes with some additional constraints.} The work \cite{singh2014characterization} showed that the set of stationary NE for the class of games can be characterized via the global minimizers of a certain non-convex mathematical program. \cite{qiu2021provably} showed that for two-player zero-sum stochastic games with independent chains, assuming players' knowledge about the opponent's past strategy, fictitious play policy optimization algorithms can achieve $\tilde{\O}(\sqrt{T})$ regrets. Recently, for $n$-player decentralized stochastic games with independent chains, relying on a dual formulation of the game based on occupancy measures, \cite{etesami2022learning} proposed polynomial-time learning algorithms based on dual averaging and dual mirror descent, which converge in terms of the averaged Nikaido-Isoda distance to the set of $\epsilon$-NE policies. However, except for \cite{qiu2021provably}, which essentially deals with two-player zero-sum stochastic games and uses a different algorithm than ours based on $Q$-learning and fictitious play, all of the aforementioned works assume players' prior knowledge of the transition probability matrices of their own Markov chain, which is somewhat restrictive in practice. Moreover, there was no algorithm with an asymptotic convergence guarantee to NE policies for the class of $n$-player stochastic games with independent chains.

\subsection{Contributions and Organization}

We consider the class of decentralized stochastic games with independent chains and unknown transition matrices. Relying on a dual formulation of the original stochastic game based on occupancy measures and introducing confidence sets to maintain high-probability estimates of the unknown transition matrices, we propose a Decentralized Mirror Descent algorithm to learn an $\epsilon$-NE. The proposed algorithm has the desired properties of independence, scalability, and convergence. Our contribution can be summarized as follows:
\begin{itemize}
	\item We propose a learning algorithm that is simple, easy to implement, and works in a fully decentralized manner. The only coordination needed is a simple signaling mechanism to indicate the end of each episode among players, which can be further relaxed by allowing an extra error term in the equilibrium computation. 
	\item Under no assumptions on the reward functions, we show the proposed algorithm is a polynomial-time learning algorithm that converges in a weaker distance (namely, the averaged Nikaido-Isoda gap) to the set of $\epsilon$-NE policies with probability at least $1-\gamma$ after at most the following number of episodes 
	$$K = \tilde{\O}\Big(\max\Big\{\frac{\big(\sum_{i=1}^n|S_i|\sqrt{\ln\frac{n|A_i||S_i|^2}{\gamma}}\big)^2}{(1-e^{-1/\tau})^2\epsilon^2},\frac{\big(\sum_{i=1}^n|A_i||S_i|^2\big)^2}{\epsilon^2}\Big\}\Big),$$
where the length of each episode is also bounded polynomially in terms of the game parameters. 
	\item Under an extra assumption on the payoffs, namely, the existence of a globally stable NE policy, which is a relaxation of the well-known monotonicity condition, we show that the proposed algorithm converges asymptotically to an $\epsilon$-NE with arbitrarily high probability.
\end{itemize}

The rest of the paper is organized as follows.
In Section \ref{sec:formulation}, we introduce the class of stochastic games with independent chains and unknown transition matrices. In Section \ref{sec:dual}, we provide a dual formulation for such games and establish several preliminary results. In Section \ref{sec:alg-design}, we develop a decentralized online mirror descent algorithm for learning $\epsilon$-NE policies. In Section \ref{sec:finite_analysis}, we present a finite time convergence analysis of the proposed algorithm by showing a polynomial-time convergence rate in terms of the averaged Nikaido-Isoda gap function to the set of $\epsilon$-NE policies. This is without any assumptions about the reward functions. In Section \ref{sec:asymptotic}, we show the proposed algorithm converges asymptotically to an $\epsilon$-NE with arbitrarily high probability under the extra variational stability assumption. We conclude the paper by identifying future research directions in Section \ref{sec:conclusion}. Omitted proofs and auxiliary lemmas can be found in Appendix A and Appendix B.

\section{Problem Formulation}\label{sec:formulation}

We consider an $n$-player stochastic game with independent and unknown state transitions, which is described by the tuple $(S_i,A_i,r_i,P_i)_{i=1}^n$, as follows.
\begin{itemize}
	\item $S_i$ is the finite set of states for player $i$ with elements $s_i\in S_i$. We denote the joint state set of all the players by $S = \prod_{i = 1}^n S_i$ with elements $\bs{s}\in S$, where $\bs{s}= (s_1,\ldots,s_n)$. 
	\item $A_i$ is the finite set of actions for player $i$ with elements $a_i\in A_i$. We denote the joint action set of all the players by $A = \prod_{i = 1}^n A_i$, and the elements of $A$ are denoted by $\bs{a} = (a_1,\ldots,a_n)$. 
	\item $r_i : S\times A \to [0, 1]$ is the reward function for player $i$, where $r_i(\bs{s},\bs{a})$ is the immediate reward received by player $i$ when the states of the players are $\bs{s} = (s_1,\ldots,s_n)$, and the actions taken by them are given by the action profile $\bs{a} = (a_1,\ldots,a_n)$.
	\item $P_i$ is the transition probability matrix for player $i$, where $P_i(s_i'|s_i,a_i)$ is the probability that the state of player $i$ moves from $s_i$ to $s_i'$ if she chooses action $a_i$. Crucial to this work, we assume that $P_i$ is \emph{unknown} to player $i$ and is \emph{independent} of other players' transition probability matrices. 
\end{itemize}

\begin{assumption}\label{ass:indep}
We assume that the joint transition probability matrix $P(s'|s,a)$ can be factored into independent components $P(s'|s,a)=\prod_{i=1}^n P_i(s'_i|s_i,a_i)$, where $P_i(s'_i|s_i,a_i)$ is the transition probability matrix for player $i$.
\end{assumption}

At any time $t$, the information available to player $i$ is given by the history of its realized states, actions, and rewards, i.e., $\H_i^t = \{s^l_i,a^l_i,r_i(s^l,a^l) : l = 0,1,\ldots ,t-1\}\cup\{s^t_i\}$.
Given  information set $\H_i^t$, player $i$ takes an action $a^t_i$ based on her current policy $\pi^t_i(\cdot | \H_i^t)$ which is a probability measure over $A_i$ and receives a reward $r_i(s^t,a^t)$, which also depends on other players’ states and actions. After that, the state of player $i$ changes from $s^t_i$ to a new state $s^{t+1}_i$ with probability $P_i(s^{t+1}_i|s^t_i,a^t_i)$. 

A sequence of probability measures $\pi_i = \{\pi_i^t, t = 0, 1, \ldots\}$ over $A_i$ that, at each
time $t$ selects an action $a_i \in A_i $ based on past observations $\H_i^t$ with probability $\pi^t_i(\cdot | \H_i^t)$, consists a general policy for player $i$. However, use of general policies is often computationally expensive, and in practical applications, players are interested in the easily implementable \emph{stationary} policies, as defined next.

\begin{definition}
	A policy $\pi_i$ for player $i$ is called stationary if the probability $\pi_i^t(a_i|\H^t)$ of choosing action $a_i$ at time $t$ depends only on the current state $s^t_i = s_i$, and is independent of the time $t$. In the case of the stationary policy, we use $\pi_i(a_i|s_i)$ to denote this time-independent probability.
\end{definition}

%\begin{assumption}\label{ass:indep}
%We assume that the joint transition probability $P(s'|s,a)$ can be factored into independent components $P(s'|s,a)=\prod_{i=1}^n P_i(s'_i|s_i,a_i)$, where $P_i(s'_i|s_i,a_i)$ is the transition model for player $i$. 
%\end{assumption}

Given some initial state $s^0$, the objective for each player $i\in [n]$ is to choose a stationary policy $\pi_i$ that maximizes its long-term expected average payoff given by
\begin{align}\label{eq:aggregate-reward}
V_i(\pi_i,\pi_{-i})=\mathbb{E}\Big[\lim_{T\to \infty}\frac{1}{T}\sum_{t=0}^{T}r_i(s^t,a^t)\Big],
\end{align} 
where $\pi_{-i}=(\pi_j, j\neq i)$,\footnote{More generally, given a vector $v$, we let $v_{-i}=(v_j, j\neq i)$ be the vector of all coordinates in $v$ other than the $i$th one.} and the expectation is with respect to the randomness introduced by players' internal chains $(P_1,\ldots,P_n)$ and their policies $\bs{\pi}=(\pi_1,\ldots,\pi_n)$. 

Next, in order to be able to establish meaningful convergence/learning results, we impose the following assumption througout this work. 

\begin{assumption}\label{ass:ergodic}
	For any player $i$ and any stationary policy $\pi_i$ chosen by that player, the induced Markov chain with transition probabilities $P^{\pi_i} (s'_i|s_i) = \sum_{a_i\in A_i}P_i(s'_i|a_i,s_i)\pi_i(a_i|s_i)$, is ergodic, and its mixing time is uniformly bounded above by some parameter $\tau$; that is, 
	\begin{align*}
		\|(\nu-\nu')P^{\pi_i}\|_1\le e^{-1/\tau}\|\nu-\nu'\|_1,\qquad \forall i,\pi_i,\nu,\nu'\in\Delta(S_i).
	\end{align*}
\end{assumption}

In fact, Assumption \ref{ass:ergodic} is a standard assumption used in the MDP literature and is much needed. Otherwise, if the transition probability matrix $P_i$ of a player $i$ is such that for some policy $\pi_i$ the induced chain $P^{\pi_i}$ takes an arbitrarily large time to mix, then there is no hope that player $i$ can evaluate the performance of policy $\pi_i$ in a reasonably short time. As is shown in the next section, under the ergodicity Assumption \ref{ass:ergodic}, for any stationary policy profile $\bs{\pi}$, the limit in \eqref{eq:aggregate-reward} indeed exists and equals \eqref{eq:linear-occupation-form1}. This fully characterizes an $n$-player stochastic game with initial state $s^0$, in which each player $i$ wants to choose a stationary policy $\pi_i$ to maximize its expected aggregate payoff $V_i(\pi_i,\pi_{-i})$. In the remainder of the paper, we shall refer to the above payoff-coupled stochastic game with independent chains and unknown transitions as $\mathcal{G}=([n], \pi,\{V_i(\pi)\}_{i\in [n]})$.

\begin{definition}
For a policy profile $\bs{\pi}^*=(\pi^*_1,\ldots,\pi^*_n)$, $\pi^*_i$ is called a best response policy of $\pi_{-i}^*$ if $V_i(\pi^*_i,\pi^*_{-i})\ge V_i(\pi_i,\pi^*_{-i})$ for any policy $\pi_i$. It is called an $\epsilon$-best response policy if $V_i(\pi^*_i,\pi^*_{-i})\ge V_i(\pi_i,\pi^*_{-i})-\epsilon$  for any policy $\pi_i$. The policy profile $\bs{\pi}^*=(\pi^*_1,\ldots,\pi^*_n)$ is called a Nash equilibrium (NE) for the game $\mathcal{G}$ if for any $i$, $\pi^*_i$ is a best response policy of $\pi_{-i}^*$. It is called an $\epsilon$-NE if for any $i$, $\pi^*_i$ is an $\epsilon$-best response policy of $\pi_{-i}^*$.
\end{definition}

The main objective of this work is to develop a decentralized and salable learning algorithm such that, if followed by the players independently, it brings the system to an $\epsilon$-NE stationary policy.

%Therefore, if player $i$ wants to maximize its expected cumulative reward, he must ensure that his chain spends most of its time on the state-action pairs $(s_i,a_i)$ with the highest rewards. 

%\cite{altman2008constrained}.

\bigskip
\section{A Dual Formulation and Preliminaries}\label{sec:dual}

In this section, we provide an alternative dual formulation for the stochastic game $\G$ based on \emph{occupancy measures}\cite{altman2021constrained}. Intuitively, from player $j$'s point of view, its long-term expected average payoff depends on the proportion of time that player $j$ spends in state $s_j$ and takes action $a_j$, denoted by its occupancy measure. Thus, the policy optimization for player $j$ can be viewed as an optimization problem in the space of occupancy measures, where players want to force their chains to spend most of their time in high-reward states. An advantage of optimization in terms of occupancy measures is that due to the independence of players' internal chains, the payoff functions admit a simple decomposable form, which is easier to analyze than the original policy variables. Moreover, the simple structure of the payoff functions in the space of occupancy measures allows us to tackle the learning problem using rich literature from online learning. We shall use this dual formulation to develop learning algorithms for finding a stationary $\epsilon$-NE.

\subsection{Occupancy Measure}

For a given MDP with a transition probability matrix $P$ and any stationary policy $\pi$, one can associate with $P$ and $\pi$ three notions of occupancy measures $\nu :  S \to [0, 1]$, $\rho :  S\times A\to [0, 1]$, and $q: S\times A\times S \to [0, 1]$ , as\footnote{The first type of occupancy measure $\nu$ is often referred to as the stationary distribution in the past literature.}
\begin{align}
	&\nu(s) = \lim_{t\to \infty}\frac{1}{T}\sum_{t=0}^{T}\mathbb{P}(s^t=s) \label{eq:niu}, \\ 
	&\rho(s,a) = \lim_{t\to \infty}\frac{1}{T}\sum_{t=0}^{T}\mathbb{P}(s^t=s,a^t=a) \label{eq:rho}, \\ 
	&q(s,a,s') = \lim_{t\to \infty}\frac{1}{T}\sum_{t=0}^{T}\mathbb{P}(s^t=s,a^t=a,s^{t+1} = s'). \label{eq:q}
\end{align}
Intuitively, $\nu(s), \rho(s,a)$ and $q(s,a,s')$ are the long-term average proportion of time of encountering the state $s$, state-action pair $(s,a)$, and state-action-next-state triple $(s,a,s')$, when executing policy $\pi$ in an MDP with transition probability matrix $P$. It can be readily shown that under Assumption \ref{ass:ergodic}, the limits in \eqref{eq:niu}, \eqref{eq:rho}, and \eqref{eq:q} indeed exist, and moreover, the following relations always hold:
\begin{align*}
	\rho(s,a) = \nu(s) \pi(a|s),\qquad q(s,a,s') = \rho(s,a) P(s'|s,a).
\end{align*}

In this work, we are primarily concerned with the occupancy measure $q(s,a,s')$ due to players not knowing their independent transition probability matrix $P_i$ in the stochastic game $\G$. In the following, we provide conditions that fully characterize the set of feasible occupancy measures $q$.

\begin{definition}\label{def:Delta}
We define the polytope of feasible occupancy measures, denoted by $\Delta$, as 
\begin{align}\label{eq:Delta}
	\Delta = \Big\{q\in [0, 1]^{|S \times A \times S|} \quad : \sum_{s,a,s'}q(s,a,s') = 1,\quad\sum_{s',a}q(s',a,s) = \sum_{a,s'}q(s,a,s'),\quad \forall s\in S\Big\}. 
\end{align}
For any $q \in \Delta$, we define its induced transition probability matrix $P^q$ and its induced stationary policy $\pi^q$ by
\begin{align*}
P^q(s'|s,a) = \frac{q(s,a,s')}{\sum_{s'}q(s,a,s')} \ \forall s,a,s', \ \ \ \ \ \ \pi^q(a|s)=\frac{\sum_{s'}q(s,a,s')}{\sum_{a',s'}q(s,a',s')}\ \forall s,a.
\end{align*}
Moreover, for a fixed transition probability matrix $P$, we denote by $\Delta(P ) \subset \Delta$ the set of occupancy measures whose induced transition probability matrix $P^q$ is exactly $P$. Similarly, we denote by $\Delta(\CP) \subset \Delta$ the set of occupancy measures whose induced transition probability matrix $P^q$ belongs to a set of transition matrices $\CP$.
\end{definition}

The reason why \eqref{eq:Delta} provides the set of feasible occupancy measures is because any feasible occupancy measure $q$ should be a valid probability distribution, i.e., $\sum_{s,a,s'}q(s,a,s') = 1$, and moreover, for any state $s\in S$,  the probability of entering it should equal to the probability of leaving it, i.e., $\sum_{s',a}q(s',a,s) = \sum_{a,s'}q(s,a,s')\ \forall s\in S.$ Given the above definition, we have the following useful lemma from \cite{altman2021constrained} .

\begin{lemma}\label{lem:feasible}
	If a function $q:  S\times A\times S \to [0, 1]$ belongs to the feasible occupancy polytope \eqref{eq:Delta}, then it is exactly the occupancy measure associated with its induced transition probability matrix $P^q$ and stationary policy $\pi^q$. Specifically, we have \eqref{eq:q} hold if one executes policy $\pi^q$ in an MDP with transition probability matrix $P^q$.
\end{lemma}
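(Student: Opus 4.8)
The plan is to verify directly that the construction $(\pi^q,P^q)$ reproduces $q$ as its state-action-next-state occupancy measure. Throughout, write $\nu(s):=\sum_{a,s'}q(s,a,s')$ for the state-marginal of $q$. The normalization constraint in \eqref{eq:Delta} gives $\sum_s\nu(s)=\sum_{s,a,s'}q(s,a,s')=1$, so $\nu\in\Delta(S)$ and $q$ is itself a probability distribution over triples. For every state $s$ and pair $(s,a)$ carrying positive $q$-mass, $\pi^q(\cdot|s)$ and $P^q(\cdot|s,a)$ are genuine probability vectors (summing $\pi^q(a|s)$ over $a$, resp.\ $P^q(s'|s,a)$ over $s'$, turns the numerator of the defining ratio into its denominator); where the $q$-mass vanishes the corresponding rows are immaterial, since the terms in which they appear below are multiplied by $0$ and may be dropped.

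The key step is to show that $\nu$ is a stationary distribution of the Markov chain on $S$ induced by $\pi^q$ and $P^q$, namely of $P^{\pi^q}(s'|s)=\sum_a\pi^q(a|s)P^q(s'|s,a)$. Indeed, for every $s'$,
\begin{align*}
\sum_s\nu(s)P^{\pi^q}(s'|s)
=\sum_{s,a}\nu(s)\pi^q(a|s)P^q(s'|s,a)
=\sum_{s,a}\Big(\sum_{s''}q(s,a,s'')\Big)\frac{q(s,a,s')}{\sum_{s''}q(s,a,s'')}
=\sum_{s,a}q(s,a,s')=\nu(s'),
\end{align*}
where the final equality is exactly the flow-balance constraint $\sum_{s',a}q(s',a,s')\!\mid_{s'=\text{target}}$, i.e.\ ``incoming flow to $s'$ equals outgoing flow from $s'$,'' built into \eqref{eq:Delta}. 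Under Assumption \ref{ass:ergodic} the induced chain $P^{\pi^q}$ is ergodic, so $\nu$ is its \emph{unique} stationary distribution and the Cesàro averages in \eqref{eq:niu}--\eqref{eq:q} do not depend on the initial state; hence it suffices to evaluate them when the chain is started from $\nu$ itself.

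Running $\pi^q$ on the MDP with kernel $P^q$ from initial distribution $\nu$, stationarity gives $\mathbb{P}(s^t=s)=\nu(s)$ for all $t\ge0$, and therefore
\begin{align*}
\mathbb{P}(s^t=s,a^t=a)&=\nu(s)\pi^q(a|s)=\sum_{s'}q(s,a,s'),\\
\mathbb{P}(s^t=s,a^t=a,s^{t+1}=s')&=\nu(s)\pi^q(a|s)P^q(s'|s,a)=q(s,a,s'),
\end{align*}
for every $t\ge0$ (when $\nu(s)=0$ all three probabilities vanish, matching $q(s,a,s')=0$). Since these are constant sequences, their Cesàro limits \eqref{eq:niu}, \eqref{eq:rho}, \eqref{eq:q} equal $\nu(s)$, $\sum_{s'}q(s,a,s')$, and $q(s,a,s')$ respectively, which shows that $q$ is precisely the occupancy measure induced by $(\pi^q,P^q)$ and, along the way, re-derives $\rho(s,a)=\nu(s)\pi^q(a|s)$ and $q(s,a,s')=\rho(s,a)P^q(s'|s,a)$.

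The only genuinely delicate point is the reduction to ``initial distribution $\nu$'': it rests on the fact that under the ergodicity hypothesis the limiting occupancy measures are independent of the starting state (so that the assertion ``\eqref{eq:q} holds when executing $\pi^q$'' is unambiguous), together with \emph{uniqueness} of the stationary distribution, which is what pins the occupancy measure of $(\pi^q,P^q)$ down to the single candidate $q$ rather than to some other element of $\Delta(P^q)$. Everything else is bookkeeping with the two defining constraints of $\Delta$; alternatively one may simply cite \cite{altman2021constrained} for this classical fact.
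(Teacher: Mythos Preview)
The paper does not supply its own proof of this lemma; it simply attributes the result to \cite{altman2021constrained} and moves on. Your proposal therefore goes well beyond what the paper does: you give a self-contained verification that $\nu(s)=\sum_{a,s'}q(s,a,s')$ is stationary for the chain $P^{\pi^q}$ built from $(\pi^q,P^q)$, invoke uniqueness of the stationary distribution to pin it down, and then read off the occupancy measures. This is the standard argument and is correct.

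One caveat worth flagging: you appeal to Assumption~\ref{ass:ergodic} to conclude that the chain $P^{\pi^q}$ is ergodic, but that assumption is stated only for the \emph{true} transition kernel $P_i$, not for an arbitrary $P^q$ induced by some $q\in\Delta$. Strictly speaking, for $q\in\Delta\setminus\Delta(P_i)$ the ergodicity (and hence uniqueness of the stationary law) is not guaranteed by the paper's hypotheses. This does not undermine the way the lemma is actually \emph{used} in the paper---there it is applied to occupancy measures whose induced kernel is the true $P_i$, where Assumption~\ref{ass:ergodic} applies---but it means the lemma as stated for all of $\Delta$ really needs an ergodicity hypothesis on $(\pi^q,P^q)$, which is what the Altman reference supplies. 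Your closing remark that one may ``simply cite \cite{altman2021constrained}'' is exactly how the paper handles it.
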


\subsection{A Dual Formulation}
We now give a dual formulation of the stochastic game $\G$ based on occupancy measures. It is shown in \cite{etesami2022learning} that under the ergodicity Assumption \ref{ass:ergodic}, due to the independency of players’ internal chains, the payoff functions admit a simple decomposable form in terms of occupancy measures $\rho$. However, since we are interested in stochastic games with unknown transition probabilities, we first extend this result in terms of occupancy measures $q$. Specifically, assume that each player $i$ is following a stationary policy $\pi_i$, and let $\nu_i$, $\rho_i$, and $q_i$ be the corresponding occupancy measures given in \eqref{eq:niu}, \eqref{eq:rho}, and \eqref{eq:q}, that are induced by $P_i$ and following the stationary policy $\pi_i$. Then, we can write the expected average payoff of player $i$, $V_i(\pi_i,\pi_{-i})$ equivalently as a function of $\rho_i$ and $\rho_{-i}$.
\begin{proposition}\label{prop:dual_formulation}
	Let Assumptions \ref{ass:indep}, \ref{ass:ergodic} hold, and assume that each player $i$ follows a stationary policy $\pi_i$. Then,
	\begin{align}\label{eq:linear-occupation-form2}
	V_i(\pi_i,\pi_{-i}) = V_i(q_i,q_{-i})\triangleq\sum_{s,a}\prod_{j=1}^n\sum_{s'_j}q_j(s_j,a_j,s'_j)r_i(s,a)=\langle q_i, l_i(q_{-i}) \rangle,
\end{align} 
where $l_i(q_{-i})$ is defined to be a vector of dimension $|A_i||S_i|^2$ whose $(s_i,a_i,s'_i)$-th coordinate is given by
\begin{align}\label{eq:l_def}
	l_i(q_{-i})_{(s_i,a_i,s'_i)}=\sum_{s_{-i},a_{-i}}\prod_{j\neq i}\sum_{s'_j}q_j(s_j,a_j,s'_j) r_i(s,a).
\end{align}
Moreover, we have $l_i(q_{-i})_{(s_i,a_i,s'_i)}=\sum_{s_{-i},a_{-i}}\prod_{j\neq i}\rho_j(s_j,a_j) r_i(s,a)\ \forall s_i,a_i,s'_i.$
\end{proposition}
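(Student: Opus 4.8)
The plan is to first reduce everything to the single identity
\begin{align*}
V_i(\pi_i,\pi_{-i}) \;=\; \sum_{s,a}\prod_{j=1}^n \rho_j(s_j,a_j)\, r_i(s,a),
\end{align*}
since the remaining claims are purely algebraic. Indeed, summing the relation $q_j(s_j,a_j,s'_j)=\rho_j(s_j,a_j)P_j(s'_j|s_j,a_j)$ noted below \eqref{eq:q} over $s'_j$ and using $\sum_{s'_j}P_j(s'_j|s_j,a_j)=1$ gives $\sum_{s'_j}q_j(s_j,a_j,s'_j)=\rho_j(s_j,a_j)$; substituting this into $\sum_{s,a}\prod_j\sum_{s'_j}q_j(s_j,a_j,s'_j)\,r_i(s,a)$ recovers the displayed right-hand side of \eqref{eq:linear-occupation-form2} as well as the last stated formula for $l_i(q_{-i})$. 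Finally, since $l_i(q_{-i})_{(s_i,a_i,s'_i)}$ does not actually depend on $s'_i$, we get $\langle q_i,l_i(q_{-i})\rangle=\sum_{s_i,a_i}\big(\sum_{s'_i}q_i(s_i,a_i,s'_i)\big)\,l_i(q_{-i})_{(s_i,a_i,\cdot)}=\sum_{s_i,a_i}\rho_i(s_i,a_i)\,l_i(q_{-i})_{(s_i,a_i,\cdot)}$, which unwinds to the same double sum.

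For the identity itself I would start from the definition \eqref{eq:aggregate-reward}. The key structural fact is that, under Assumption \ref{ass:indep} and because the players use \emph{stationary} policies (so rewards never feed back into the dynamics), the per-player trajectories $\{(s^t_j,a^t_j):t\ge 0\}$ are mutually independent across $j$; this follows by induction on $t$, using that $s^0$ is fixed, each $a^0_j\sim\pi_j(\cdot\,|\,s^0_j)$ is drawn independently, and the joint transition factorizes. Hence for every $t$, $\mathbb{P}(\bs s^t=\bs s,\bs a^t=\bs a)=\prod_{j=1}^n\mathbb{P}(s^t_j=s_j,a^t_j=a_j)$, so $\mathbb{E}[r_i(s^t,a^t)]=\sum_{s,a}r_i(s,a)\prod_{j=1}^n\mathbb{P}(s^t_j=s_j,a^t_j=a_j)$. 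By Assumption \ref{ass:ergodic}, each marginal $\mathbb{P}(s^t_j=s_j,a^t_j=a_j)=\mathbb{P}(s^t_j=s_j)\pi_j(a_j|s_j)$ converges (geometrically) to $\nu_j(s_j)\pi_j(a_j|s_j)=\rho_j(s_j,a_j)$, so the finite product converges to $\prod_j\rho_j(s_j,a_j)$ and hence so does its Cesàro average; this shows $\lim_{T\to\infty}\frac1T\sum_{t=0}^{T}\mathbb{E}[r_i(s^t,a^t)]=\sum_{s,a}r_i(s,a)\prod_j\rho_j(s_j,a_j)$.

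The remaining, and main, obstacle is the interchange $\mathbb{E}\big[\lim_T\frac1T\sum_{t}r_i(s^t,a^t)\big]=\lim_T\frac1T\sum_t\mathbb{E}[r_i(s^t,a^t)]$ that is implicit in \eqref{eq:aggregate-reward}. I would handle it by invoking the strong law of large numbers for ergodic Markov chains applied to the \emph{product} chain $\bs s^t=(s^t_1,\dots,s^t_n)$: since the contraction in Assumption \ref{ass:ergodic} forces each $P^{\pi_j}$ to be irreducible and aperiodic on $S_j$, a standard argument shows the product kernel $\prod_j P^{\pi_j}$ is irreducible and aperiodic on $S=\prod_jS_j$ with stationary distribution $\prod_j\nu_j$; consequently the state–action process is ergodic and $\frac1T\sum_{t=0}^{T}r_i(s^t,a^t)\to\sum_{s,a}r_i(s,a)\prod_j\rho_j(s_j,a_j)$ almost surely. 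Since this limit is almost surely a deterministic constant, its expectation is that constant, which by the previous paragraph equals the Cesàro limit of the expectations, giving \eqref{eq:linear-occupation-form2}; equivalently, once a.s.\ convergence is known one may appeal to dominated convergence, using that $\frac1T\sum_t r_i(s^t,a^t)\in[0,1]$ is uniformly bounded. Either way, the ergodicity of the product chain is the crux, and everything else is bookkeeping.
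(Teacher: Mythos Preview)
Your proposal is correct and follows the same overall route as the paper: reduce to the identity $V_i(\pi_i,\pi_{-i})=\sum_{s,a}\prod_j\rho_j(s_j,a_j)\,r_i(s,a)$ via the relation $\rho_j(s_j,a_j)=\sum_{s'_j}q_j(s_j,a_j,s'_j)$, and then read off \eqref{eq:linear-occupation-form2} and \eqref{eq:l_def} by pure algebra. The only substantive difference is that the paper does not prove the $\rho$-identity at all; it simply invokes \cite{etesami2022learning} (see \eqref{eq:linear-occupation-form1} and \eqref{eq:v-rho-coordinate}) and then performs the one-line substitution. By contrast, you supply a self-contained argument for the identity: independence of the per-player trajectories under Assumption~\ref{ass:indep} and stationary policies, geometric convergence of each marginal under Assumption~\ref{ass:ergodic}, and ergodicity of the product chain to pass from $\mathbb{E}[\lim_T\cdot]$ to $\lim_T\mathbb{E}[\cdot]$ (with bounded convergence as a backup). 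That argument is sound---in particular, the product of finitely many irreducible aperiodic finite chains is again irreducible aperiodic, so the SLLN step goes through---and is exactly the kind of proof one would expect the cited reference to contain. So your version is more self-contained, while the paper's is shorter by outsourcing the analytic work.
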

\begin{proof}
It has been shown in \cite{etesami2022learning} that under Assumptions \ref{ass:indep} and \ref{ass:ergodic}, if each player $i$ follows a stationary policy $\pi_i$, then
	\begin{align}\label{eq:linear-occupation-form1}
		V_i(\pi_i,\pi_{-i}) = V_i(\rho_i,\rho_{-i})=\sum_{s,a}\prod_{j=1}^n\rho_j(s_j,a_j)r_i(s,a)=\langle \rho_i, v_i(\rho_{-i}) \rangle,
	\end{align} 
	where $v_i(\rho_{-i})$ is defined to be a vector of dimension $|S_i||A_i|$ whose $(s_i,a_i)$-th coordinate is given by
	\begin{align}\label{eq:v-rho-coordinate}
		v_i(\rho_{-i})_{(s_i,a_i)}=\sum_{s_{-i},a_{-i}}\prod_{j\neq i}\rho_j(s_j,a_j) r_i(s,a).
	\end{align}
 Since we have $\rho_i(s_i, a_i) = \sum_{s'_i}q_i(s_i, a_i,s'_i)$, by substituting this relation into \eqref{eq:linear-occupation-form1} we obtain the equivalent forms  \eqref{eq:linear-occupation-form2} and \eqref{eq:l_def} in terms of the $q$ variables. Finally, from \eqref{eq:l_def} and \eqref{eq:v-rho-coordinate} and the fact that $\rho_i(s_i, a_i) = \sum_{s'_i}q_i(s_i, a_i,s'_i)$, it is easy to see that $l_i(q_{-i})_{(s_i,a_i,s'_i)}=v_i(\rho_{-i})_{(s_i,a_i)},\ \forall s_i,a_i,s'_i.$
\end{proof}

%Relying on Proposition \ref{prop:dual_formulation}, it was shown in \cite{etesami2022learning} that the $n$-player stochastic game $\G$ admits a stationary NE policy, but at the same time finding a stationary NE for $\G$ without any assumption on the reward functions is PPAD-hard. Thus, in the remainder of this paper, we restrict our attention to the cases where players’ reward functions allow the existence of scalable learning algorithms, and we shall focus only on developing learning algorithms to find an $\epsilon$-NE for the game $\G$.

Using Lemma \ref{lem:feasible} and Proposition \ref{prop:dual_formulation}, the problem of finding the optimal stationary policies for the players reduces to one of finding the optimal feasible occupancy measures for them. In fact, with the dual formulation in hand, we can formulate an equivalent virtual game associated with the stochastic game $\G$ as follows:

\smallskip
\begin{definition}\label{def:v_game}
Let 
$$\Delta_i\!=\! \big\{q_i\!\in\! [0, 1]^{|A_i||S_i|^2}\!\!: \sum_{s_i,a_i,s'_i}q_i(s_i,a_i,s'_i)\!=\! 1, \sum_{s'_i,a_i}\!q_i(s'_i,a_i,s_i)\!=\! \sum_{s'_i,a_i}\!q_i(s_i,a_i,s'_i) ,\forall s_i\big\}$$
be the feasible occupancy polytope for player $i$. Moreover, denote by $\Delta_i(P_i) \subset \Delta_i$ the set of feasible occupancy measures whose induced transition probability matrix $P^{q_i}$ is exactly $P_i$. The virtual game $\V=([n], {q},\{V_i({q})\}_{i\in [n]})$ associated with the stochastic game $\G $ is an $n$-player continuous-action static game, where the action of player $i$ is to choose an $q_i$ from its action set $\Delta_i (P_i)$, and its payoff function is given by \eqref{eq:linear-occupation-form2}.
\end{definition} 

\smallskip
Ideally, we would like every player to work with the virtual game $\V$ with action set $\Delta_i(P_i)$ as it admits a payoff function that is linear with respect to the player's action, hence making it amenable to the use of online learning algorithms. However, this can not be performed as $P_i$ is not known to player $i$ so the player can not compute $\Delta_i(P_i)$. Nevertheless, observe that once each player $i$ has decided on her occupancy measure $\hat{q}_i \in \Delta_i$ (which may not belong to $\Delta_i(P_i)$), then the game $\G$ is fully determined by the players' policies $\{\pi^{\hat{q}_i}_i\}_{i=1}^n$, where $\pi^{\hat{q}_i}_i$ is the stationary policy induced by $\hat{q}_i$. In this regard, with some abuse of notations, the payoff function of player $i$ is given by $$V_i(\hat{q}_i,\hat{q}_{-i}) = V_i(\pi_i^{\hat{q}_i}, \pi_{-i}^{\hat{q}_{-i}}),$$ where $\hat{q}_i \in \Delta_i$, and $V_i(\pi_i^{\hat{q}_i}, \pi_{-i}^{\hat{q}_{-i}})$ is as defined in \eqref{eq:aggregate-reward}. When $\hat{q}_i \in \Delta_i(P_i)$, we also have $V_i(\hat{q}_i,\hat{q}_{-i}) = V_i(\pi_i^{\hat{q}_i}, \pi_{-i}^{\hat{q}_{-i}}) = \langle {\hq}_i, l_i({\hq}_{-i}) \rangle $ as defined in \eqref{eq:linear-occupation-form2}. In the subsequent sections, we will develop a learning algorithm such that each player $i$ updates her occupancy measure $\hat{q}_i$ iteratively to maximize $V_i(\hat{q}_i,\hat{q}_{-i})$, while improving its estimate about the true underlying transition probability matrix $P_i$.

\section{A Learning Algorithm for $\epsilon$-NE Policies}\label{sec:alg-design}

In this section, we develop our learning algorithm for the stochastic game $\mathcal{G}$. The algorithm proceeds in different episodes, each containing a random number of time instances. The main idea is that each player $i$ will use \emph{confidence sets} and \emph{online mirror descent} (OMD) to learn an occupancy measure $\hat{q}_i$ such that (i) its induced transition probability matrix $P^{\hat{q}_i}_i$ approximates the true transition probability matrix $P_i$, and (ii) its induced stationary policy $\pi^{\hat{q}_i}_i$ approximates player $i$'s best response to $\pi_{-i}^{\hat{q}_{-i}}$. The complete pseudo-code of the proposed learning algorithm is presented in Algorithm \ref{alg:main}. To describe the learning algorithm in detail, we first consider the following definition of a \emph{shrunk polytope}.

\begin{definition}[Shrunk Polytope]
    \label{def:shrunk}
	Given $0<\delta_i < 1$, we define $\Delta_{i,\delta_i} \triangleq \{q_i\in \Delta_i  :  \sum_{s'_i}q_i(s_i, a_i,s'_i)\ge \delta_i, \forall s_i,a_i\}$
	to be the shrunk polytope of feasible occupancy measures for player $i$. Equivelently, we can write $\Delta_{i,\delta_i} = \Delta_i \cap \{q_i:\rho_i \ge \delta_i\mathbf{1}\}$, where $\mathbf{1}$ is the column vector of all ones of dimension $|S_i||A_i|$ and $\rho_i(s_i, a_i) = \sum_{s'_i}q_i(s_i, a_i,s'_i)$. Moreover, for a fixed transition probability matrix $P_i$ or a set of transition probability matrices $\CP_i$, we define $\Delta_{i,\delta_i}(P_i) \subseteq \Delta_{i,\delta_i}$ or $\Delta_{i,\delta_i}(\CP_i)\subseteq \Delta_{i,\delta_i}$ as the set of occupancy measures $q_i$ whose induced transition probability matrix $P^{q_i}$ equals $P_i$ or belongs to the set $\CP_i$, respectively.
\end{definition}

Restricting player $i$'s occupancy measures to be in $\Delta_{i,\delta_i}$ ensures that player $i$ uses stationary policies that choose any action with probability at least $\delta_i$, hence encouraging exploration during the learning process. Thanks to the continuity of the payoff functions, working with shrunk polytope $\Delta_{i,\delta_i}$ with a sufficiently small threshold $\delta_i$ can only result in a negligible loss in players' payoffs, as shown in the following lemma \cite[Lemma 2]{etesami2022learning}.

\begin{lemma}\label{lem:shrunk}
For any $\epsilon > 0$, there exist polynomial-time computable thresholds $\{\delta_i > 0, i \in [n]\}$, such that
	\begin{align}
		\label{eq:shrunk}
		\max_{q_i'\in\Delta_{i,\delta_i}(P_i)}V_i(q_i',{q}_{-i})\ge \max_{q_i'\in\Delta_{i}(P_i)}V_i(q_i',{q}_{-i})-\epsilon, \ \ \forall q\in \Delta_1\times\cdots\times\Delta_n. 
	\end{align}
\end{lemma}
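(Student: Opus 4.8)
The plan is to fix an arbitrary opponent profile and perturb the optimal best response slightly toward the uniform policy. Fix $q=(q_1,\dots,q_n)\in\Delta_1\times\cdots\times\Delta_n$ and a player $i$. The first step is to note that, over $\Delta_i(P_i)$, the map $q_i'\mapsto V_i(q_i',q_{-i})$ is a \emph{fixed} linear functional: if player $i$ plays $\pi^{q_i'}_i$ with $q_i'\in\Delta_i(P_i)$, then by Lemma~\ref{lem:feasible} her induced occupancy measure is exactly $q_i'$, and by Assumption~\ref{ass:indep} the occupancy measures $\hat q_{-i}$ actually induced by the policies $\pi^{q_{-i}}_{-i}$ do not depend on $q_i'$; hence by Proposition~\ref{prop:dual_formulation}, $V_i(q_i',q_{-i})=\langle q_i',\ell_i\rangle$ with $\ell_i:=l_i(\hat q_{-i})$ a fixed vector whose every coordinate is a convex combination of reward values $r_i\in[0,1]$ and therefore lies in $[0,1]$. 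The second preliminary step is to observe that $\Delta_i(P_i)$ is convex: the constraint $P^{q_i}=P_i$ reads $q_i(s_i,a_i,s_i')=P_i(s_i'|s_i,a_i)\sum_{s_i''}q_i(s_i,a_i,s_i'')$, a system of linear equalities, so $\Delta_i(P_i)$ is the intersection of the polytope $\Delta_i$ with an affine subspace.

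Now let $q_i^*\in\arg\max_{q_i'\in\Delta_i(P_i)}\langle q_i',\ell_i\rangle$ and let $q_i^{u}\in\Delta_i(P_i)$ be the occupancy measure induced by $P_i$ and the uniform policy $u_i(a_i|s_i)=1/|A_i|$, so that $\rho_i^{u}(s_i,a_i)=\nu_i^{u}(s_i)/|A_i|\ge c_i/|A_i|$, where $c_i:=\min_{s_i}\nu_i^{u}(s_i)$ and $\nu_i^u$ is the stationary distribution of the uniform-policy chain. For $\lambda\in(0,1)$ set $\tilde q_i:=(1-\lambda)q_i^*+\lambda q_i^{u}$. By convexity $\tilde q_i\in\Delta_i(P_i)$; moreover $\tilde\rho_i\ge\lambda\rho_i^{u}\ge(\lambda c_i/|A_i|)\mathbf 1$, so $\tilde q_i\in\Delta_{i,\delta_i}(P_i)$ once we set $\delta_i:=\lambda c_i/|A_i|$. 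Finally, using linearity and the fact that $\langle\cdot,\ell_i\rangle\in[0,1]$ on distributions, $V_i(\tilde q_i,q_{-i})=(1-\lambda)\langle q_i^*,\ell_i\rangle+\lambda\langle q_i^{u},\ell_i\rangle\ge\langle q_i^*,\ell_i\rangle-\lambda$. Choosing $\lambda=\min\{\epsilon,1/2\}$ gives $\max_{q_i'\in\Delta_{i,\delta_i}(P_i)}V_i(q_i',q_{-i})\ge V_i(\tilde q_i,q_{-i})\ge\max_{q_i'\in\Delta_i(P_i)}V_i(q_i',q_{-i})-\epsilon$, which is the claim; note that $\delta_i$ is independent of $q_{-i}$, so the bound holds for all $q\in\Delta_1\times\cdots\times\Delta_n$.

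The soft part of the argument (convexity of $\Delta_i(P_i)$, boundedness of $\ell_i$, the convex-combination estimate) is routine; the one place that needs genuine care, and where the hypotheses are actually used, is justifying that $\delta_i$ is a bona fide positive, polynomial-time computable threshold. Positivity of $c_i=\min_{s_i}\nu_i^{u}(s_i)$ is exactly where ergodicity (Assumption~\ref{ass:ergodic}, applied to the uniform policy) enters, since an ergodic finite chain has an everywhere-positive stationary distribution; and for a given $P_i$ the value $c_i$ is the solution of a linear system, hence computable in polynomial time. The subtlety specific to the unknown-transition setting is that the algorithm only knows $P_i$ to lie in a confidence set $\CP_i$, so one should take $\delta_i$ to be (a polynomial-time lower bound on) $\inf_{P_i'\in\CP_i}\min_{s_i}\nu_i^{u}(s_i;P_i')$, or alternatively absorb a further $O(\epsilon)$ error and use the empirical transition estimate. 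Up to this bookkeeping, the proof is the transcription of \cite[Lemma~2]{etesami2022learning} — stated there for $\rho$-occupancy measures — into $q$-occupancy measures via $\rho_i(s_i,a_i)=\sum_{s_i'}q_i(s_i,a_i,s_i')$.
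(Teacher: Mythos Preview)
Your proof is correct. The paper does not give its own argument for this lemma but simply imports it from \cite[Lemma~2]{etesami2022learning}; the convex-combination-with-uniform-policy construction you carry out is exactly that result, transcribed from $\rho$-occupancy to $q$-occupancy coordinates via $\rho_i(s_i,a_i)=\sum_{s_i'}q_i(s_i,a_i,s_i')$, as you yourself note. Your closing observation --- that $c_i=\min_{s_i}\nu_i^{u}(s_i)$ depends on the unknown $P_i$, so $\delta_i$ is not directly computable by player $i$ in the present setting --- is a genuine subtlety that the paper does not address; the lemma as stated (and as proved in the cited reference) implicitly treats $P_i$ as an available input to the threshold computation.
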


Finally, we consider the following ``nondegeneracy" assumption on players' internal chains, which requires that with some positive probability $\alpha_i>0$, all states are reachable for each player $i$ and under all policies. Assumption \ref{ass:minprob} serves to provide an upper bound for the length of each episode in our learning algorithm and can be viewed as a relaxation of that made in other works for the case of single-agent MDPs \cite{rosenberg2019online,neu2010online}. 

\begin{assumption}
	\label{ass:minprob}
	There exists some $\alpha >0$ such that for every player $i$,   $\sum_{a_i}P_i(s_i'|s_i,a_i)>\alpha ,\ \forall s_i,s_i'$.
\end{assumption}

Now, we are ready to describe our main distributed learning algorithm. Each player $i$ performs two tasks in parallel: (i) maintains and updates a confidence set $\CP_i$ of its own (unknown) transition probability matrix $P_i$, and (ii) uses an OMD rule to update the occupancy measure $\hat{q}_i\in \Delta(\CP_i)$. In the following, we describe each of these tasks in detail.

\subsection{Confidence Set}
 For each player $i$, the algorithm maintains counters $N_i(s_i,a_i)$ and $M_i(s_i,a_i,s'_i)$  to record the total number of visits of each state-action pair $(s_i, a_i)$ and each state-action-state triple $(s_i,a_i,s'_i)$ so far, respectively. A \emph{confidence set} $\CP_i$, which includes all transition probability matrices that are close to $P_i$ with high confidence, is maintained and updated for each episode. Specifically,  at the end of each episode $k\ge1$, player $i$ will compute the empirical transition probability matrix $\bar{P}_i^k(s'_i|s_i,a_i) = \frac{M_i^k(s_i,a_i,s'_i)}{\max\{1,N_i^k(s_i,a_i)\}}$ from the current counters $N_i^k(s_i,a_i)$ and $M_i^k(s_i,a_i,s'_i)$, and will update the confidence set for episode $k$ as 
\begin{align}
	\label{eq:confupdate}
	\mathcal{P}_i^k=\Big\{\hat{P}: |\hat{P}(s'_i|s_i,a_i)-\bar{P}_i(s'_i|s_i,a_i)|\le \epsilon_i^k(s'_i|s_i,a_i),\ \forall s'_i,s_i,a_i\Big\} \cap \CP_i^{k-1},
\end{align}
where $\epsilon_i^k(\cdot)$ is a parameter that will be determined later. Note that the confidence set in \eqref{eq:confupdate} is also a polytope with an efficient description in terms of the problem parameters.  

\subsection{Online Mirror Descent (OMD)}

The OMD component of our algorithm is similar to \cite{etesami2022learning}. Given any desired accuracy $\epsilon > 0$ for an $\epsilon$-NE, each player first uses Lemma \ref{lem:shrunk} to determine a threshold $\delta_i$, and chooses an initial occupancy measure $\hat{q}^1_i \in \Delta_{i,\delta_i}$. During each episode $k$, player $i$ takes actions according to the stationary policy $\pi_i^k:= \pi^{\hat{q}^k_i}_i$. The episode continues until each player $i$ has visited all its state-action pairs $(s_i,a_i)$ at least once. At the end of the episode, player $i$ will first update the confidence set $\mathcal{P}_i^k$ as in \eqref{eq:confupdate}, and then will update its occupancy measure $\hat{q}^{k+1}_i$ using OMD:
\begin{align*}
	\hq^{k+1}_i=\argmax_{\hq_i\in \Delta_{i,\delta_i}(\CP_i^k)}\big\{\eta^k\langle \hq_i, R^k_i\rangle -D_{h_i}(\hq_i||\hq_i^k)\big\},
\end{align*}
where $\eta^k$ is the stepsize, $D_{h_i}(p||q) \triangleq h_i(p)-h_i(q)-\la \nabla h_i(q),p-q\ra$ is the \emph{Bregman divergence} induced by a $\mu$-strongly convex regularizer $h_i(\cdot)$, and $R_i^k$ is an estimator for the gradient of the payoff function $V_i(\pi^{\hat{q}^k_i}_i,\pi^{\hat{q}^k_{-i}}_{-i})$ constructed using the collected samples of the reward $r_i$ during episode $k$. Since $\hat{q}^k_i \in \Delta_{i,\delta_i},$ from Assumption \ref{ass:minprob}, one can show that at any time $t$, $\P(s_i^t = s_i)\ge \alpha \delta_i\ \forall s_i$. As a result, the expected length of each episode $k$ is bounded above by the cover time of the underlying Markov chain induced by the policy $\pi_i^k$ and is at most $\tilde{\O}(\max_i\frac{ |S_i|}{\alpha \delta_i^2})$, where $\tilde{\O}(\cdot)$ hides logarithmic terms.

\begin{algorithm}[t!]\caption{A Decentralized Online Mirror Descent Algorithm for Player $i$}\label{alg:main}

\noindent
{\bf Input:} Initial occupancy measure $\hat{q}^1_i  = \frac{1}{|A_i||S_i|^2}\cdot \mathbf{1}$, counters $N_i(s_i,a_i)=0$, $M_i(s_i,a_i,s'_i)=0$, step-size sequence $\{\eta^k\}_{k=1}^K$, mixing time thresholds $\{d^k\}_{k=1}^K$, and a $\mu$-strongly convex regularizer $h_i:\Delta_{i,\delta_i}\to \mathbb{R}$. 

%\medskip
%{\bf Initialization:} Initialize policy $\pi^1_i = \pi^\hat{q}^1_i_i$.
\medskip
{\bf For} $k=1,\ldots,K$, do the following:

\begin{itemize} 
\item[$\bullet$] At the start of episode $k$, compute $\pi^k_i = \pi^{\hat{q}^k_i}$, i.e.,
\begin{align}
	\label{eq:pi_i}
\pi^k_i(a_i|s_i)=\frac{\sum_{s'_i}\hq^k_i(a_i,s_i,s'_i)}{\sum_{a'_i,s'_i}\hq^k_i(s_i,a'_i,s'_i)}  \ \ \forall s_i\in S_i, a_i\in A_i,
\end{align}
and keep playing according to this stationary policy $\pi_i^k$ during expisode $k$. Update counters $N_i(s_i,a_i)$ and $M_i(s_i,a_i,s'_i)$ at each step. 
\item[$\bullet$] Let $\tau^k_i\ge d^k$ be the first (random) time such that all state-action pairs $(s_i,a_i)$ are visited during steps $[d^k, \tau_i^k]$. Episode $k$ terminates after $\tau^{k}\!=\!\max_i \tau_i^k$ steps. 

\item[$\bullet$] Let $X'_i=S_i\times A_i$, and $R^k_i\in\mathbb{R}_+^{|S_i||A_i|}$ be a random vector (initially set to zero), which is constructed sequentially during the sampling interval $[\tau^k+d,\tau^{k+1}]$ as follows: 
\begin{itemize}
\item {\bf For} $t=d^k,\ldots,\tau^{k}$ and while $X_i\neq \emptyset$, player $i$ picks an action $a^t_i$ according to $\pi_i^k(\cdot|s_i^t)$, and observes the payoff $r_i(s^t,a^t)$ and its next state $s_i^{t+1}$. If $(s_i^t,a^t_i)\in X_i$, then update $X_i=X_i\setminus \{(s_i^t,a^t_i)\}$, and let
\begin{align}\label{eq:unbiased-gradient-vector}
R^k_i=R^k_i+r_i(s^t,a^t)\ \boldsymbol{\rm e}_{(s^t_i,a^t_i)},
\end{align} 
where $\boldsymbol{\rm e}_{(s^t_i,a^t_i)}$ is the basis vector with all entries being zero except that the $(s^t_i,a^t_i)$-th entry is 1.\item[]{\bf End For}  
\end{itemize}
\item Expand $R_i^k$ from $\R^{|S_i\times A_i|}$ to $\R^{|S_i\times A_i\times S_i|}$, $i.e.$, $R_i^k(s_i,a_i,s'_i) = R_i^k(s_i,a_i), \forall s_i,a_i,s'_i$.
\item[$\bullet$] At the end of episode $k$, update the confidence set $\mathcal{P}_i^k$ as in \eqref{eq:confupdate}, and the occupancy measure using
\begin{align}\label{eq:alg-rho-update}
	\hq^{k+1}_i=\argmax_{\hq_i\in \Delta_{i,\delta_i}(\CP_i^k)}\big\{\eta^k\langle \hq_i, R^k_i\rangle -D_{h_i}(\hq_i||\hq_i^k)\big\}.
\end{align}  
\end{itemize}
{\bf End For}
\end{algorithm}

\section{Polynomial-Time Convergence Rate Using Averaged Nikaido-Isoda Gap Function}
\label{sec:finite_analysis}

In this section, we analyze the finite time convergence properties of Algorithm \ref{alg:main}. First we need to introduce some notations. Following Algorithm \ref{alg:main}, every player $i$ will hold a occupancy measure $\hq_i^k$ during episode $k$, and will play according to policy $\pi^k_i = \pi^{\hat{q}^k_i}$ as defined in \eqref{eq:pi_i}. We denote the occupancy measures induced by $\pi_i^k$ and the unknown transition probability matrix $P_i$ over the spaces $S_i$, $S_i\times A_i$, and $S_i\times A_i \times S_i$, by $\nu_i^k$, $\rho_i^k$, and $q_i^k$, respectively. Notice that $\hq_i^k$ is not necessarily the true $q_i^k$ as it is induced by $\pi_i^k$ and an estimated $\hat{P}_i^k\in \CP_i^k$. We also denote the occupancy measures induced by $\pi_i^k$ and $\hat{P}_i^k$ over the spaces $S_i$ and $S_i\times A_i$ by $\hat{\nu}_i^k$ and $\hat{\rho}_i^k$, respectively. Moreover, we let $p_i^k$ and $ \hat{p}_i^k$ be the state transition probabilities of the Markov chains induced by $(\pi_i^k, P_i)$ and $(\pi_i^k,\hat{P}_i^k)$, respectively, i.e.,
\begin{align*}
&p_i^k(s'_i|s_i) = \sum_{a_i\in A_i}\pi_i^k(a_i|s_i)P_i(s'_i|a_i,s_i) \ \forall s_i,s'_i,\cr
&\hat{p}_i^k(s'_i|s_i) = \sum_{a_i\in A_i}\pi_i^k(a_i|s_i)\hat{P}_i^k(s'_i|a_i,s_i)\ \forall s_i,s'_i.
\end{align*}

In order to measure the proximity of the iterates generated by Algorithm \ref{alg:main} to an NE, we will use the well-known Nikaido-Isoda gap function \cite{nikaido1955note} as the ``merit" function.
\begin{definition}
	The Nikaido-Isoda gap function\footnote{We use bold symbols to denote aggregate variables of all the players, e.g., $\boldsymbol{\pi}^k = (\pi_1^k,\ldots,\pi_n^k)$, $\boldsymbol{q}^k = (q_1^k,\ldots,q_n^k)$, $\boldsymbol{\hq}^k = (\hq_1^k,\ldots,\hq_n^k)$, $\boldsymbol{P}= \prod_{i = 1}^nP_i$, $\boldsymbol{\Delta}= \prod_{i = 1}^n\Delta_i$, and $\boldsymbol{\Delta}_{\boldsymbol{\delta}}= \prod_{i = 1}^n\Delta_{i,\delta_i}$.} $\Psi:\boldsymbol{\Delta}(\bs{P})\times \boldsymbol{\Delta}\to \R$ is given by\footnote{Since every stationary policy $\bs{\pi}$ can be induced by some $\bs{q}\in \boldsymbol{\Delta}(\bs{P})$, we  define $\Psi$ on the domain $\boldsymbol{\Delta}(\bs{P})\times \boldsymbol{\Delta}$ instead of $\boldsymbol{\Delta}\times \boldsymbol{\Delta}$.}
	\begin{align*}
		\Psi(\bs{q}',\bs{q}) \triangleq \sum_{i=1}^n [V_i(q_i',q_{-i})-V_i(q_i,q_{-i})] = \sum_{i=1}^n [V_i(\pi_i^{q_i'},\pi_{-i}^{q_{-i}})-V_i(\pi_i^{q_i},\pi_{-i}^{q_{-i}})].
	\end{align*}
\end{definition}
The advantage of the Nikaido-Isoda gap function is that if  $\max_{\bs{q}' \in \boldsymbol{\Delta}(\bs{P})} \Psi(\bs{q'},\bs{q})\le \epsilon$, then 
\begin{align*} 	\max_{\pi_i'}V_i(\pi_i',\pi_{-i}^{q_{-i}})-V_i(\pi_i^{q_i},\pi_{-i}^{q_{-i}}) = \max_{q_i' \in \Delta_i(P_i)}V_i(q_i',q_{-i})-V_i(q_i,q_{-i})\le \epsilon\ \forall i, 
	 \end{align*}
which implies that $\bs{\pi}^{\bs{q}} = (\pi_1^{q_1},\ldots, \pi_n^{q_n})$ is an $\epsilon$-NE of the game $\G$. Moreover, for $\bs{\delta} = (\delta_1,\ldots,\delta_n)$ satisfying Lemma \ref{lem:shrunk} with $\epsilon/2$, if we also have $\max_{\bs{q}' \in \boldsymbol{\Delta}_{\boldsymbol{\delta}}(\bs{P})} \Psi(\bs{q'},\bs{q})\le \epsilon/2$, then 
	\begin{align*}
		\max_{\pi_i'}V_i(\pi_i',\pi_{-i}^{q_{-i}})-V_i(\pi_i^{q_i},\pi_{-i}^{q_{-i}}) &= \max_{q_i' \in \Delta_i(P_i)}V_i(q_i',q_{-i})-V_i(q_i,q_{-i})\cr
  &\le \max_{q_i' \in \Delta_{i,\delta_i}}V_i(q_i',q_{-i})-V_i(q_i,q_{-i})+\frac{\epsilon}{2}\le \epsilon \ \forall i,
	\end{align*}
	which shows that $\bs{\pi}^{\bs{q}}$ is an $\epsilon$-NE of the stochastic game $\G$. This suggests that the value of $\max_{\bs{q}' \in \boldsymbol{\Delta}_{\boldsymbol{\delta}}(\bs{P})} \Psi(\bs{q'},\bs{q})$ can serve as a measure of proximity of $\bs{q}$ (or $\bs{\pi}^{\bs{q}}$) to an $\epsilon$-NE of the game $\G$. 
 
Now, we are ready to state the main result of this section, which provides a polynomial-time convergence rate for Algorithm \ref{alg:main} in terms of the average Nikaido-Isoda gap function. 
\begin{theorem}
	\label{theo:finite}
	Assume Assumptions \ref{ass:indep}, \ref{ass:ergodic}, and \ref{ass:minprob} hold. Given any $\gamma\in (0, 1)$, if each player $i$ follows Algorithm \ref{alg:main} with a proper choice of stepsize $\eta^k = \frac{c}{\sqrt{K}}\ \forall k$, quadratic regularizer $h(\cdot) = \frac{1}{2}\|\cdot\|^2$, and parameters 
 \begin{align}\nonumber
      d= \tau \ln(\frac{(1-e^{-1/\tau})\sqrt{K}}{2\min_i|S_i|}),  \ \ \ \ \epsilon_i^k(s'_i|s_i,a_i) = \sqrt{\frac{\ln(nK|A_i||S_i|^2)-\ln(\gamma)}{\max(1,N_i^k(s_i,a_i))}},
 \end{align}
then with probability at least $1-2\gamma$, the sequence $\{\bs{\hq}^k\}_{k=1}^K$ satisfies
	\begin{align}
		\label{eq:finite_bound}
\max_{\bs{q}'\in\bs{\Delta}_{\bs{\delta}}(\bs{P})}\sum_{k=1}^{K}\frac{\eta^k}{w^K}\Psi(\bs{q}',\bs{\hq}^k)\le	\O\Big(\frac{\sum_{i=1}^n|S_i|\sqrt{\ln(nK|A_i||S_i|^2)-\ln\gamma}}{(1-e^{-1/\tau})\sqrt{K}}+\frac{\sum_{i=1}^n|A_i||S_i|^2}{\sqrt{K}}\Big),
	\end{align}
where $w_K=\sum_{k=1}^{K}\eta^k$. In particular, we have $\max_{\bs{q}'\in\bs{\Delta}_{\bs{\delta}}(\bs{P})}\sum_{k=1}^{K}\frac{\eta^k}{w^K}\Psi(\bs{q}',\bs{\hq}^k)\le \epsilon$ for any $K$ that satisfies
	$$K = \tilde{\O}\left(\max\Big\{\frac{\big(\sum_{i=1}^n|S_i|\sqrt{\ln(n|A_i||S_i|^2)-\ln\gamma}\big)^2}{(1-e^{-1/\tau})^2\epsilon^2},\frac{\big(\sum_{i=1}^n|A_i||S_i|^2\big)^2}{\epsilon^2}\Big\}\right).$$
\end{theorem}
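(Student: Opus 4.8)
The plan is to bound the weighted averaged Nikaido--Isoda gap $\sum_{k=1}^{K}\frac{\eta^k}{w_K}\Psi(\bs{q}',\bs{\hq}^k)$, uniformly over comparators $\bs{q}'\in\bs{\Delta}_{\bs{\delta}}(\bs{P})$, by splitting it into a classical online-mirror-descent (OMD) regret term plus error terms arising from (i) the players not knowing their transition matrices and (ii) the gradient estimators $R_i^k$ being built from finitely many, not-yet-mixed samples. I would first fix the high-probability event $\mathcal{E}$ on which everything works: using a Hoeffding-type concentration for the empirical transition estimates together with a union bound over the $\le nK\max_i|A_i||S_i|^2$ state--action--state triples and episodes, the choice $\epsilon_i^k(s_i'|s_i,a_i)=\sqrt{(\ln(nK|A_i||S_i|^2)-\ln\gamma)/\max(1,N_i^k(s_i,a_i))}$ guarantees that with probability at least $1-\gamma$ we have $P_i\in\CP_i^k$ for all $i$ and $k$; a second (Azuma) bound on the martingale-difference terms appearing below fails with probability at most $\gamma$. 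On $\mathcal{E}$ the confidence sets are valid, so, since they are nested, any fixed comparator $q_i'\in\Delta_{i,\delta_i}(P_i)$ is feasible for every OMD update $\hq_i^{k+1}=\argmax_{\hq_i\in\Delta_{i,\delta_i}(\CP_i^k)}\{\eta^k\langle\hq_i,R_i^k\rangle-D_{h_i}(\hq_i\|\hq_i^k)\}$, and moreover $P_i$ and the induced $\hat P_i^k$ both lie in $\CP_i^k$ and hence differ by at most $2\epsilon_i^k$ coordinatewise. I would also record that Assumptions \ref{ass:ergodic} and \ref{ass:minprob}, together with $\hq_i^k\in\Delta_{i,\delta_i}$, keep every episode length (and thus the whole horizon) polynomially bounded, so the concentration arguments range over a controlled number of steps.

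Next I linearize the gap. Writing $q_i^k$ for the \emph{true} occupancy measure induced by $(\pi_i^k,P_i)$ and $\rho_{-i}^k$ for the corresponding marginals of $\bs{q}_{-i}^k$, Proposition \ref{prop:dual_formulation}, Lemma \ref{lem:feasible}, and the definition $V_i(\hq_i,\hq_{-i})=V_i(\pi_i^{\hq_i},\pi_{-i}^{\hq_{-i}})$ yield the exact identity
$$\Psi(\bs{q}',\bs{\hq}^k)=\sum_{i=1}^{n}\big\langle q_i'-q_i^k,\ l_i(\rho_{-i}^k)\big\rangle.$$
I then write $q_i'-q_i^k=(q_i'-\hq_i^k)+(\hq_i^k-q_i^k)$ and $l_i(\rho_{-i}^k)=R_i^k+\big(\E[R_i^k\mid\F^k]-R_i^k\big)+\big(l_i(\rho_{-i}^k)-\E[R_i^k\mid\F^k]\big)$, which produces four groups of terms: (a) the OMD term $\langle q_i'-\hq_i^k,R_i^k\rangle$; (b) the own-chain estimation error $\langle\hq_i^k-q_i^k,l_i(\rho_{-i}^k)\rangle$, bounded by $\|\hq_i^k-q_i^k\|_1$ since the coordinates of $l_i(\cdot)$ lie in $[0,1]$; (c) a martingale-noise term $\langle q_i'-\hq_i^k,\E[R_i^k\mid\F^k]-R_i^k\rangle$; and (d) a bias term $\langle q_i'-\hq_i^k,l_i(\rho_{-i}^k)-\E[R_i^k\mid\F^k]\rangle$. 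For (d), the crucial observation is that, because the players' chains are independent and the samples composing $R_i^k$ are collected only after the $d$-step burn-in, each coordinate of $\E[R_i^k\mid\F^k]$ equals the corresponding coordinate of $l_i(\rho_{-i}^k)$ up to $\O(e^{-d/\tau})$ (times a factor polynomial in $n$); with the prescribed $d$, for which $e^{-d/\tau}=2\min_i|S_i|/((1-e^{-1/\tau})\sqrt K)$, this contributes, after weighting and summing over $k$, a term dominated by the first term of \eqref{eq:finite_bound}.

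For (a), I would invoke the standard mirror-descent regret inequality with the quadratic regularizer ($\mu=1$, Euclidean dual norm) and the constant stepsize $\eta^k=c/\sqrt K$: since the feasible sets $\Delta_{i,\delta_i}(\CP_i^k)$ are nested and all contain $q_i'$, the usual telescoping gives $\sum_k\eta^k\langle q_i'-\hq_i^k,R_i^k\rangle\le D_{h_i}(q_i'\|\hq_i^1)+\tfrac12\sum_k(\eta^k)^2\|R_i^k\|_2^2$, and since $\|R_i^k\|_2^2\le|S_i|^2|A_i|$ after the expansion of $R_i^k$ to $\R^{|S_i\times A_i\times S_i|}$ and $D_{h_i}(q_i'\|\hq_i^1)=\O(1)$, dividing by $w_K=c\sqrt K$ produces the $\O\big(\sum_i|A_i||S_i|^2/\sqrt K\big)$ term of \eqref{eq:finite_bound}. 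For (b), I would use a perturbation (``simulation'') argument resting on the geometric mixing of Assumption \ref{ass:ergodic}: since $\pi_i^k$ is the common policy of the two chains and $P_i,\hat P_i^k$ are $2\epsilon_i^k$-close, $\|\hq_i^k-q_i^k\|_1\le\frac{C}{1-e^{-1/\tau}}\sum_{s_i,a_i}\rho_i^k(s_i,a_i)\sum_{s_i'}\epsilon_i^k(s_i'|s_i,a_i)=\frac{C|S_i|\sqrt{\ln(nK|A_i||S_i|^2/\gamma)}}{1-e^{-1/\tau}}\sum_{s_i,a_i}\frac{\rho_i^k(s_i,a_i)}{\sqrt{N_i^k(s_i,a_i)}}$, where the geometric-series factor $1/(1-e^{-1/\tau})$ is exactly where the mixing time enters. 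Crucially, since each of the $k-1$ earlier episodes visits every $(s_i,a_i)$ before terminating, $N_i^k(s_i,a_i)\ge k-1$, and combined with $\sum_{s_i,a_i}\rho_i^k(s_i,a_i)=1$ this forces $\sum_{k=1}^K\sum_{s_i,a_i}\rho_i^k(s_i,a_i)/\sqrt{N_i^k(s_i,a_i)}\le\sum_k 1/\sqrt{\max(1,k-1)}=\O(\sqrt K)$, so (b) contributes $\O\big(\sum_i|S_i|\sqrt{\ln(nK|A_i||S_i|^2/\gamma)}/((1-e^{-1/\tau})\sqrt K)\big)$ --- the first term of \eqref{eq:finite_bound}. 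Term (c) is a bounded martingale-difference sum in both its $q_i'$ part (bound $\|\sum_k\eta^k(\E[R_i^k\mid\F^k]-R_i^k)\|_\infty$ by coordinatewise Azuma and use $\|q_i'\|_1=1$) and its $\hq_i^k$ part (scalar Azuma, since $\hq_i^k$ is $\F^k$-measurable), and after weighting contributes only $\O\big(\sum_i\sqrt{\ln(|S_i||A_i|/\gamma)}/\sqrt K\big)$, dominated by (b). Summing (a)--(d) over $i$ and maximizing over $\bs{q}'$ --- all four bounds being uniform in $\bs{q}'$ --- yields \eqref{eq:finite_bound} on $\mathcal{E}$, i.e.\ with probability at least $1-2\gamma$; setting its right-hand side to $\epsilon$ and solving for $K$ (the $\ln K$ inside the square root being absorbed by $\tilde{\O}(\cdot)$) gives the claimed bound on $K$. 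The step I expect to be the main obstacle is (b): honestly transferring the $2\epsilon_i^k$ transition-estimation error into an $\ell_1$ error on $\hq_i^k$ via the simulation lemma, and then extracting the $\O(\sqrt K)$ bound for $\sum_k\sum_{s_i,a_i}\rho_i^k(s_i,a_i)/\sqrt{N_i^k(s_i,a_i)}$ using only the crude but essential facts $N_i^k(s_i,a_i)\ge k-1$ and $\sum_{s_i,a_i}\rho_i^k(s_i,a_i)=1$; a secondary difficulty is the careful conditioning in term (d), where one must argue that the random first-visit time defining each coordinate of $R_i^k$ is at least $d$ and that, by independence of the chains, the other players' joint state at that time is $e^{-d/\tau}$-close to their stationary marginals regardless of how this time arose.
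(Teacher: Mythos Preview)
Your proposal is correct and follows essentially the same route as the paper. Your four-way split (a)--(d) is exactly the paper's three-term decomposition into ``Error'' (your (b)), ``Regret'' (your (a)), and ``Bias'' (your (c)$+$(d), which the paper bounds by first separating the martingale part and then the deterministic $e^{-d/\tau}$ mixing bias); the paper works with the $\rho$-variables via Proposition~\ref{prop:dual_formulation} while you stay with $q$, but this is only notational. For (b) the paper uses a direct fixed-point perturbation of the stationary distribution, $\|\hat\nu_i^k-\nu_i^k\|_1\le(1-e^{-1/\tau})^{-1}\|\hat\nu_i^k(\hat p_i^k-p_i^k)\|_1$, and then bounds $\|\hat P_i^k-P_i\|_\infty$ by $2\max\epsilon_i^k\le\sqrt{\cdots/k}$ using $N_i^k(s_i,a_i)\ge k$; your simulation-lemma argument with $\sum_{s_i,a_i}\rho_i^k(s_i,a_i)/\sqrt{N_i^k(s_i,a_i)}\le 1/\sqrt{k-1}$ is a slightly different packaging of the same two facts (geometric mixing plus the per-episode visit guarantee) and yields the identical rate. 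One small point where your write-up is actually more careful than the paper's: in (c) you handle the dependence of the comparator $q_i'$ on the full trajectory by taking a coordinatewise Azuma bound on $\sum_k\eta^k(\E[R_i^k\mid\F^k]-R_i^k)$ and then using $\|q_i'\|_1=1$, whereas the paper's Lemma~\ref{lem:bias} applies Azuma directly to $\langle\rho_i^*-\hat\rho_i^k,\E_k[R_i^k]-R_i^k\rangle$ with $\rho_i^*$ defined as an argmax over the whole run.
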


%A direct corollary is that if we consider the scenario of \emph{coarse correlated equilibrium}\footnote{Here we treat every stationary policy $\pi_i$ as a pure strategy of game $\G$ and derive the notion of CCE correspondingly.}, then assume that Algorithm \ref{alg:main} is run with $\bs{\delta} = (\delta_1,\ldots,\delta_n)$ such that \eqref{eq:shrunk} in Lemma \ref{lem:shrunk} hold with $\varepsilon$, if a coodinator provides a profile $\bar{\bs{\pi}}$ that recommands each $\bs{\pi^k},k\in [K]$ with probability $1/K$, the profile is an $2\varepsilon$-coarse correlated equilibrium of game $\G$.

\begin{proof}
First, we note that
	\begin{align}\label{eq:finite_decompose}
	\max_{\bs{q}'\in\bs{\Delta}_{\bs{\delta}}(\bs{P})}\sum_{k=1}^{K}\frac{\eta^k}{w^K}\Psi(\bs{q}',\bs{\hq}^k)
	& = \sum_{i=1}^n\Big(\max_{q_i'\in \Delta_{i,\delta_i}(P_i)}\Big\{\sum_{k=1}^{K}\frac{\eta^k}{w^K}[V_i(\pi_i^{q_i'},\pi_{-i}^{\hq_{-i}^k})-V_i(\pi_i^{\hq_i^k},\pi_{-i}^{\hq_{-i}^k})]\Big\}\Big)\cr
	&=\sum_{i=1}^n\Big(\max_{q_i'\in \Delta_{i,\delta_i}(P_i)}\Big\{\sum_{k=1}^{K}\frac{\eta^k}{w^K}[V_i(\pi_i^{q_i'},\pi_{-i}^k)-V_i(\pi_i^k,\pi_{-i}^k)]\Big\}\Big). 
	\end{align}
	For any $i\in [n]$, let $\pi_i^* :=  \pi_i^{\hat{q}^*_i}$, where 
	$$\hat{q}^*_i = \argmax_{q_i'\in \Delta_{i,\delta_i}(P_i)}\sum_{k=1}^{K}\frac{\eta^k}{w^K}[V_i(\pi_i^{q_i'},\pi_{-i}^k)-V_i(\pi_i^k,\pi_{-i}^k)].$$
Let us denote by $\nu_i^*$, $\rho_i^*$, and $q_i^*$, the occupancy measures induced by $\pi_i^*$ and the unknown transition probability matrix $P_i$ (note that since $q^*_i \in \Delta_{i,\delta_i}(P_i)$, from Lemma \ref{lem:feasible} we have $\hq^*_i= q_i^*$ ). Using \eqref{eq:linear-occupation-form1} in Proposition \ref{prop:dual_formulation}, we have
	\begin{align*}
		V_i(\pi_i^{\hat{q}^*_i},\pi_{-i}^k) - V_i(\pi_{i}^k,\pi_{-i}^k) = V_i(\pi_i^*,\pi_{-i}^k)-V_i(\pi_i^k,\pi_{-i}^k) = \langle \rho_i^*, v_i(\rho_{-i}^k) \rangle - \langle \rho_i^k, v_i(\rho_{-i}^k) \rangle.
	\end{align*}
	Therefore, we have
	\begin{align*}
		\sum_{k=1}^{K}\frac{\eta^k}{w^K}[V_i(\pi_i^{\hq_i^*},\pi_{-i}^k)&-V_i(\pi_i^k,\pi_{-i}^k)]=\sum_{k=1}^{K}\frac{\eta^k}{w^K}\Big(\langle \rho_i^*, v_i(\rho_{-i}^k) \rangle - \langle \rho_i^k, v_i(\rho_{-i}^k) \rangle\Big)\\
		=& \underbrace{\sum_{k=1}^{K}\frac{\eta^k}{w^K}\langle \hat{\rho}_i^k-\rho_i^k, v_i(\rho_{-i}^k) \rangle}_{\text{Error}}
		+\underbrace{\sum_{k=1}^{K}\frac{\eta^k}{w^K}\langle \rho_i^*-\hat{\rho}_i^k, R_i^k \rangle}_{\text{Regret}}
		+\underbrace{\sum_{k=1}^{K}\frac{\eta^k}{w^K}\langle \rho_i^*-\hat{\rho}_i^k, v_i(\rho_{-i}^k)-R_i^k \rangle}_{\text{Bias}}.
	\end{align*}
	Here, the first term ``Error" measures the error between $\hat{\rho}_i^k$ and $\rho_i^k$, the second term ``Regret" is the intrinsic regret of the OMD algorithm for the online linear optimization problem, and the third term ``Bias" corresponds to the bias of using $R_i^k$ to approximate $v_i(\rho_{-i}^k)$. Therefore, in order to show \eqref{eq:finite_bound}, we only need to provide high probability bounds for each of the three terms: Error, Regret, and Bias. To that end, we first show in the following lemma that the event $\{P_i \in \mathcal{P}_i^k, \forall k \in [K]\text{, } i \in [n]\}$ happens with probability at least $1-\gamma$. 
	\begin{restatable}{lemma}{confidenta}
		\label{lem:confident1}
		Assume that each player $i$ follows Algorithm \ref{alg:main} with the choice of $\epsilon_i^k(s'_i|s_i,a_i) = \sqrt{\frac{\ln(nK|A_i||S_i|^2)-\ln(\gamma)}{2\max(1,N_i^k(s_i,a_i))}}$. Then, with probability at least $1-\gamma$, we have $P_i \in \mathcal{P}_i^k\ \forall k \in [K]\text{, } i \in [n]$. Moreover, under this event, we have
		\begin{align}
			\max_{s_i,a_i,s'_i}\Big|\hat{P}_i^k(s'_i|s_i,a_i)-P_i(s'_i|s_i,a_i)\Big| \le 2\max_{s_i,a_i,s'_i}\epsilon_i^k(s_i,a_i,s'_i)\le \sqrt{\frac{2\ln(nK|A_i||S_i|^2)-\ln(\gamma)}{k}}.\label{eq:maxepsilon}
		\end{align}
	\end{restatable}
	Next, conditioned on the event $\{P_i \in \mathcal{P}_i^k, \forall k \in [K]\text{, } i \in [n]\}$, we can bound each of the terms Error, Regret, and Bias, by using the following lemmas whose proofs are given in Appendix A.
	\begin{restatable}{lemma}{error}
		\label{lem:error}
		Under the assumptions of Theorem \ref{theo:finite} and conditioned on the event $\{P_i \in \mathcal{P}_i^k\  \forall k \in [K], i \in [n]\}$,
		\begin{align*}
			\text{Error}\le \frac{|S_i|\sqrt{2\ln(nK||A_i|S_i|^2)-\ln\gamma}}{(1-e^{-1/\tau})w^K}\sum_{k=1}^{K}\frac{\eta^k}{\sqrt{k}}.
		\end{align*}
	\end{restatable}

	\begin{restatable}{lemma}{regret}
		\label{lem:regret}
		Under the assumptions of Theorem \ref{theo:finite} and conditioned on the event $\{P_i \in \mathcal{P}_i^k\ \forall k \in [K], i \in [n]\}$,
		\begin{align*}
			\text{Regret} \le \frac{|A_i||S_i|^2}{2\mu w^K}\sum_{k=1}^{K}(\eta^k)^2+\frac{D_{h_i}(q_i^*||\hq_i^1)}{w^K}.
		\end{align*}
	\end{restatable}

	\begin{restatable}{lemma}{bias}
		\label{lem:bias}
		Under the assumptions of Theorem \ref{theo:finite}, with probability at least $1-\gamma/n$, we have
		\begin{align*}
			\text{Bias}\le \frac{2}{w^K}\sqrt{2\ln\frac{n}{\gamma}\sum_{k=1}^K(\eta^k)^2} +2e^{-\frac{d}{\tau}}.
		\end{align*}
	\end{restatable}
	Putting everything together, we have shown that conditioned on the event $\{P_i \in \mathcal{P}_i^k\ \forall k \in [K], i \in [n]\}$, with probability at least $1-\gamma/n$,
	\begin{align*}
		\sum_{k=1}^{K}\frac{\eta^k}{w^K}[V_i(\pi_i^{\hq_i^*},\pi_{-i}^k)-V_i(\pi_i^k,\pi_{-i}^k)]   \le& \frac{|S_i|\sqrt{2\ln(\frac{nK|A_i||S_i|^2}{\gamma})}}{(1-e^{-1/\tau})w^K}\sum_{k=1}^{K}\frac{\eta^k}{\sqrt{k}}+\frac{|A_i||S_i|^2}{2\mu w^K}\!\sum_{k=1}^{K}\!(\eta^k)^2+\frac{D_{h_i}(q_i^*||\hq_i^1)}{w^K}\\
		&+\frac{2}{w^K}\sqrt{2\ln\frac{n}{\gamma}\sum_{k=1}^K(\eta^k)^2} +2e^{-\frac{d}{\tau}}.
	\end{align*}
	If we take $\eta^k = \frac{c}{\sqrt{K}}, \forall k, h(\cdot) = \frac{1}{2}\|\cdot\|^2 \text{ and } d= \tau \ln(\frac{(1-e^{-1/\tau})\sqrt{K}}{2\min_i|S_i|})$, then we have $w_K = c\sqrt{K}$. Moreover, since $\sum_{k=1}^{K}\frac{1}{\sqrt{k}}\le 2\sqrt{K}$ and $D_{h_i}(q_i^*||\hq_i^1) = \|q_i^*-\hq_i^1\|^2\le |A_i||S_i|^2$, we can write
	\begin{align}
		\sum_{k=1}^{K}\frac{\eta^k}{w^K}[V_i(\pi_i^{\hq_i^*},\pi_{-i}^k)-V_i(\pi_i^k,\pi_{-i}^k)]   \le&\frac{2|S_i|\sqrt{2\ln(\frac{nK|A_i||S_i|^2}{\gamma})}}{(1-e^{-1/\tau})\sqrt{K}}
		+ \frac{c|A_i||S_i|^2}{2\mu \sqrt{K}}+\frac{|A_i||S_i|^2}{c\sqrt{K}}\cr
		&+\frac{2\sqrt{2\ln\frac{n}{\gamma}}}{\sqrt{K}}
		+\frac{2|S_i|}{(1-e^{-1/\tau})\sqrt{K}}\cr
		&=\O\Big(\frac{|S_i|\sqrt{\ln\frac{nK|A_i||S_i|^2}{\gamma}}}{(1-e^{-1/\tau})\sqrt{K}}+\frac{|A_i||S_i|^2}{\sqrt{K}}\Big).\label{eq:finite_O}
	\end{align}
	Using the union bound we have that conditioned on the event $\{P_i \in \mathcal{P}_i^k\ \forall k \in [K], i \in [n]\}$, with probability at least $1-\gamma$, \eqref{eq:finite_O} holds for all $i\in [n]$. Thus, using Lemma \ref{lem:confident1} and \eqref{eq:finite_decompose}, with probability at least $1-2\gamma$, we have
	\begin{align*}
		\max_{\bs{q}'\in\bs{\Delta}_{\bs{\delta}}(\bs{P})}\sum_{k=1}^{K}\frac{\eta^k}{w^K}\Psi(\bs{q}',\bs{\hq}^k)\le 
		\O\Big(\frac{\sum_{i=1}^n|S_i|\sqrt{\ln(nK|A_i||S_i|^2)-\ln\gamma}}{(1-e^{-1/\tau})\sqrt{K}}+\frac{\sum_{i=1}^n|A_i||S_i|^2}{\sqrt{K}}\Big).
	\end{align*}
	Finally, using the above expression, it is easy to see that in order to have $\max_{\bs{q}'\in\bs{\Delta}_{\bs{\delta}}(\bs{P})}\sum_{k=1}^{K}\frac{\eta^k}{w^K}\Psi(\bs{q}',\bs{\hq}^k)\le \epsilon$, it suffices to take $$K = \tilde{\O}\Big(\max\Big\{\frac{\big(\sum_{i=1}^n|S_i|\sqrt{\ln(n|A_i||S_i|^2)-\ln\gamma}\big)^2}{(1-e^{-1/\tau})^2\epsilon^2},\frac{\big(\sum_{i=1}^n|A_i||S_i|^2\big)^2}{\epsilon^2}\Big\}\Big).$$

\end{proof}

\section{Asymptotic Convergence to an $\epsilon$-Nash Equilibrium Policy}
\label{sec:asymptotic}
In this section, we show that if Algorithm \ref{alg:main} is run with the choice of $\bs{\delta} = (\delta_1,\ldots,\delta_n)$ satisfying Lemma \ref{lem:shrunk}, then the iterates generated by Algorithm \ref{alg:main} will converge asymptotically to a \emph{globally stable} $\epsilon$-NE policy of the game $\G$ (if it exists) with high probability. We begin with the following definition.
\begin{definition}
	Given $\bs{\delta} = (\delta_1,\ldots,\delta_n)$, we define $\V_{\bs{\delta}}$ to be the constrained version of the virtual game $\V$ in which the action set for each player $i$ is given by $\Delta_{i,\delta_i} (P_i)$ (instead of $\Delta_i (P_i)$).
\end{definition}

From Lemma \ref{lem:shrunk}, we know that a NE of $\V_{\bs{\delta}}$ is an $\epsilon$-NE of the game $\V$, and so its induced policy is an $\epsilon$-NE of the original stochastic game $\G$. In order to establish asymptotic convergence of Algorithm \ref{alg:main} to an $epsilon$-NE policy, we make the following assumption on the constrained virtual game $\V_{\bs{\delta}}$.
\begin{assumption}(\!\cite{mertikopoulos2019learning})
	\label{ass:g_stable}
	The constrained virtual game $\V_{\bs{\delta}}$ admits a unique NE $\boldsymbol{q}^*$ that is \emph{globally stable}, i.e.,
	\begin{align}
		\label{eq:g_stable1}
		\la \boldsymbol{v}(\boldsymbol{q}),\boldsymbol{q}^*-\boldsymbol{q}\ra \ge 0, \qquad \forall \boldsymbol{q}\in \boldsymbol{\Delta_{\bs{\delta}}}(\boldsymbol{P}),
	\end{align}
	with equality if and only if $\boldsymbol{q} = \boldsymbol{q}^*$, where $\boldsymbol{v}(\boldsymbol{q}) = (v_i(q_{-i}),i \in [n])$ is the vector of players’ payoff gradients with respect to their own strategies. From Definition \ref{def:v_game}, the global stability condition \eqref{eq:g_stable1} can be  written as
	\begin{align}
	\label{eq:g_stable2}
	\sum_{i=1}^n \la l_i(q_{-i}),q_i^*-q_i\ra \ge 0, \qquad \forall \boldsymbol{q}\in \boldsymbol{\Delta_{\bs{\delta}}}(\boldsymbol{P}).
	\end{align}
\end{assumption}

The notion of \emph{variational stability} was first introduced in \cite{mertikopoulos2019learning} as a relaxation of the well-known \emph{monotonicity condition} \cite{rosen1965existence} for static continuous action games. Specifically, while requiring that the game admits a unique NE (which is almost a prerequisite for any learning algorithm to converge), it is shown in \cite{mertikopoulos2019learning} that the monotonicity condition implies that the game admits a unique NE that is \emph{globally stable}. We are now ready to state the main result of this section.
\begin{theorem}
	\label{theo:assymptotic}
	Assume Assumptions \ref{ass:indep}, \ref{ass:ergodic}, and  \ref{ass:g_stable} hold. Given any $\epsilon>0$, assume each player $i$ follows Algorithm \ref{alg:main} with a choice of  $\delta_i$ satisfying Lemma \ref{lem:shrunk}, and a nonincreasing sequence of step-sizes satisfying $\sum_{k=1}^{\infty}\eta^k = \infty$, $\sum_{k=1}^{\infty}(\eta^k)^2 < \infty $, and $\sum_{k=1}^{\infty}\eta^k\sqrt{\ln k/k} < \infty $. Then, for any $\gamma\in (0,1)$ and the choice of parameters $d^k = 2\tau\ln k$ and $\epsilon_i^k(s'_i|s_i,a_i) = \sqrt{\frac{\ln(2nk^2|A_i||S_i|^2)-\ln\gamma}{2\max(1,N_i^k(s_i,a_i))}}$, we have $\lim_{k\to\infty}\boldsymbol{\pi}^k = \boldsymbol{\pi}^*$ with probability at least $1-\gamma$, where $\boldsymbol{\pi}^*$ is an $\epsilon$-NE of the game $\G$.
\end{theorem}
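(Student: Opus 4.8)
The plan is to combine the variational stability structure of Assumption \ref{ass:g_stable} with a stochastic-approximation / Lyapunov argument à la \cite{mertikopoulos2019learning}, run on the ``true'' occupancy iterates $\bs{q}^k$ rather than the estimated $\bs{\hq}^k$. First I would fix the high-probability event $\{P_i\in\CP_i^k\ \forall k,i\}$, which by (the asymptotic-parameter analogue of) Lemma \ref{lem:confident1} holds with probability at least $1-\gamma$ under the stated choice of $\epsilon_i^k(\cdot)$; all subsequent claims are made on this event. On this event, the confidence radius shrinks like $\tilde{\O}(1/\sqrt{k})$, so $\|\hat P_i^k-P_i\|$ and hence $\|\hat\rho_i^k-\rho_i^k\|$ decay summably-in-squared-norm and the cross term $\sum_k \eta^k\|\hat\rho_i^k-\rho_i^k\|$ is controlled by the hypothesis $\sum_k\eta^k\sqrt{\ln k/k}<\infty$; the mixing-time bias from the growing thresholds $d^k=2\tau\ln k$ is $\O(e^{-d^k/\tau})=\O(1/k^2)$, which is summable. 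These facts let me treat the OMD update \eqref{eq:alg-rho-update} as a perturbed mirror-descent recursion on $\bs{\Delta}_{\bs{\delta}}(\bs{P})$ whose driving vector is $\bs{v}(\bs{q}^k)$ plus a martingale-difference noise (the gradient estimator $R_i^k$ is unbiased for $v_i(\rho_{-i}^k)$ up to the vanishing mixing bias, by the construction in Algorithm \ref{alg:main}) plus the two summable error terms above.

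Next I would introduce the Bregman/Fenchel-coupling Lyapunov function $F^k = \sum_{i=1}^n D_{h_i}(q_i^*\,\|\,\hq_i^k)$ (with $h_i=\tfrac12\|\cdot\|^2$ this is just $\tfrac12\|\bs{q}^*-\bs{\hq}^k\|^2$ after identifying $\hq_i^k$ with $q_i^k$ on the good event via Lemma \ref{lem:feasible}). The standard three-point identity for mirror descent gives a recursion of the form
\begin{align*}
F^{k+1}\le F^k - \eta^k\la \bs{v}(\bs{q}^k),\bs{q}^*-\bs{q}^k\ra + \eta^k\la \text{noise}^k,\bs{q}^*-\hat{\bs{\rho}}^k\ra + C(\eta^k)^2 + \eta^k\,\xi^k,
\end{align*}
where $\xi^k$ collects the estimation and mixing errors with $\sum_k\eta^k\xi^k<\infty$ a.s. on the good event. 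By Assumption \ref{ass:g_stable}, $\la \bs{v}(\bs{q}^k),\bs{q}^*-\bs{q}^k\ra\ge 0$, so the leading drift term is favorable. Summing, using $\sum_k(\eta^k)^2<\infty$, and invoking a supermartingale convergence theorem (Robbins–Siegmund) on the martingale part, I conclude that $F^k$ converges a.s. and that $\sum_k \eta^k\la \bs{v}(\bs{q}^k),\bs{q}^*-\bs{q}^k\ra<\infty$. Since $\sum_k\eta^k=\infty$, this forces $\liminf_k \la \bs{v}(\bs{q}^k),\bs{q}^*-\bs{q}^k\ra=0$, and along such a subsequence $\bs{q}^k\to\bs{q}^*$ by the strict-inequality part of \eqref{eq:g_stable2} together with compactness of $\bs{\Delta}_{\bs{\delta}}(\bs{P})$. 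Then, because $F^k$ converges and $F^k$ vanishes only at $\bs{q}^*$ (being $\tfrac12\|\bs{q}^*-\cdot\|^2$), the full sequence satisfies $\bs{q}^k\to\bs{q}^*$, hence $\bs{\hq}^k\to\bs{q}^*$ on the good event. Finally, since the induced-policy map $\bs{q}\mapsto\bs{\pi}^{\bs{q}}$ is continuous on the shrunk polytope (the denominators $\sum_{s_i'}q_i(s_i,a_i,s_i')\ge\delta_i$ are bounded away from $0$), we get $\bs{\pi}^k=\bs{\pi}^{\bs{\hq}^k}\to\bs{\pi}^{\bs{q}^*}=:\bs{\pi}^*$; and $\bs{q}^*$ being the NE of $\V_{\bs{\delta}}$ makes $\bs{\pi}^*$ an $\epsilon$-NE of $\G$ by Lemma \ref{lem:shrunk}. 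The whole argument is conditioned on the probability-$(1-\gamma)$ confidence event, giving the claimed probability bound.

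The main obstacle I expect is making the perturbed-descent bookkeeping rigorous: one must show the gradient estimator $R_i^k$, which is assembled from a single pass collecting one sample per state-action pair over the interval $[\tau^k+d^k,\tau^{k+1}]$, is unbiased for $v_i(\rho_{-i}^k)$ up to a bias of order $e^{-d^k/\tau}$ (this uses ergodicity, Assumption \ref{ass:ergodic}, to argue that after $d^k$ burn-in steps the visited state is $e^{-d^k/\tau}$-close in distribution to $\nu_i^k$, together with the fact that other players' marginals factorize by Assumption \ref{ass:indep}), and that its conditional variance is uniformly bounded (rewards lie in $[0,1]$ so $\|R_i^k\|\le\sqrt{|S_i||A_i|}$). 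A secondary subtlety is that the OMD step is taken over the time-varying feasible set $\Delta_{i,\delta_i}(\CP_i^k)$, not the fixed set $\Delta_{i,\delta_i}(P_i)$ containing $\bs{q}^*$; I would handle this by noting $\bs{q}^*\in\Delta_{i,\delta_i}(\CP_i^k)$ on the good event (since $P_i\in\CP_i^k$), so $\bs{q}^*$ remains a legal comparator in the optimality inequality for \eqref{eq:alg-rho-update}, and by separately bounding $\|\hat\rho_i^k-\rho_i^k\|$ in terms of the confidence radius via a perturbation bound for stationary distributions under the uniform mixing Assumption \ref{ass:ergodic}. These are the places where the asymptotic (as opposed to finite-time) choices of $d^k$ and $\epsilon_i^k$ are essential, so that every error stream is summable against $\{\eta^k\}$.
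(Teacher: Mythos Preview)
Your proposal is correct and follows a genuinely different route from the paper's. Both arguments condition on the confidence event $\{P_i\in\CP_i^k\ \forall k,i\}$, use $F^k=\sum_i D_{h_i}(q_i^*\|\hq_i^k)$ as a Lyapunov function, and decompose the one-step increment of $F^k$ into the variational-stability drift, a martingale noise, a mixing bias of order $e^{-d^k/\tau}$, and an estimation error controlled by the confidence radius. The divergence is in how convergence is extracted. You take conditional expectations to kill the martingale term and invoke the Robbins--Siegmund almost-supermartingale theorem: since the drift $\eta^k\langle\bs{v}(\bs{q}^k),\bs{q}^*-\bs{q}^k\rangle$ is nonnegative by Assumption \ref{ass:g_stable} and all remaining terms are summable, Robbins--Siegmund yields simultaneously that $F^k$ converges a.s.\ and that $\sum_k\eta^k\langle\bs{v}(\bs{q}^k),\bs{q}^*-\bs{q}^k\rangle<\infty$; then $\sum_k\eta^k=\infty$ forces a subsequence $\bs{q}^{k_m}\to\bs{q}^*$, and convergence of $F^k$ upgrades this to the full sequence. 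The paper instead runs two separate lemmas: a recurrence lemma (Lemma \ref{lem:recurrent}), proved by contradiction together with the strong law of large numbers for martingale differences, showing that every neighborhood of $\bs{q}^*$ is visited infinitely often; and then a ``capture'' lemma (Lemma \ref{lem:stay_in}) using Doob's maximal inequality to show that once the iterates are close they remain close with high probability. Your route is shorter and more in line with standard stochastic-approximation machinery; the paper's is more hands-on and makes the recurrence/capture mechanisms explicit. One small correction: your aside ``identifying $\hq_i^k$ with $q_i^k$ on the good event via Lemma \ref{lem:feasible}'' is not right---on the good event $P_i\in\CP_i^k$ but the induced $\hat P_i^k$ need not equal $P_i$, so $\hq_i^k\neq q_i^k$ in general; what holds (and what you actually use) is $\|\hq_i^k-q_i^k\|\to 0$ at the confidence-radius rate, and this discrepancy is precisely one of the summable error streams $\eta^k\xi^k$ in your recursion.
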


	To prove Theorem \ref{theo:assymptotic}, we first state the following lemmas, whose proofs are deferred to Appendix B. Lemma \eqref{lem:confident2} is an analogous version of Lemma \eqref{lem:confident1} except that it extends the range for $k\in [K]$ to any nonnegative integer  $k\in \N$. Moreover, Lemma \ref{lem:qhat_converge} shows the convergence of $\boldsymbol{\hat{q}}^k$ to $\boldsymbol{q}^k$ as $k\to \infty$.   

        \begin{restatable}{lemma}{confidentb}
		\label{lem:confident2}
		Under the same assumptions as in Theorem \ref{theo:assymptotic}, with probability at least $1-\gamma$, we have $P_i \in \mathcal{P}_i^k, \forall k \in \N\text{, } i \in [n]$. Moreover, under this event, we have
		\begin{align}	\max_{s_i,a_i,s'_i}\Big|\hat{P}_i^k(s'_i|s_i,a_i)-P_i(s'_i|s_i,a_i)\Big| \le 2\max_{s_i,a_i,s'_i}\epsilon_i^k(s_i,a_i,s'_i)\le \sqrt{\frac{2\ln (2nk^2|A_i||S_i|^2)-\ln\gamma}{k}}.\label{eq:maxepsilon2}
		\end{align}
	\end{restatable}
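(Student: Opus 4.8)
\textbf{Proof proposal for Lemma~\ref{lem:confident2}.}

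The plan is to mirror the proof of Lemma~\ref{lem:confident1}, replacing the finite union bound over $k\in[K]$ with an infinite union bound that is made summable by the episode-dependent failure probabilities. First I would fix a player $i$, a triple $(s_i,a_i,s_i')$, and an episode index $k$, and condition on the event $N_i^k(s_i,a_i)=m$ for a given $m\ge 1$. Conditioned on the $m$ visits to $(s_i,a_i)$, the observed next states are i.i.d.\ samples from $P_i(\cdot\,|\,s_i,a_i)$, so Hoeffding's inequality gives $\P\big(|\bar P_i^k(s_i'|s_i,a_i)-P_i(s_i'|s_i,a_i)|>\epsilon\big)\le 2e^{-2m\epsilon^2}$. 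Plugging in $\epsilon=\epsilon_i^k(s_i'|s_i,a_i)=\sqrt{\tfrac{\ln(2nk^2|A_i||S_i|^2)-\ln\gamma}{2\max(1,N_i^k(s_i,a_i))}}$ and summing over the (at most $k$) possible values of $m$ yields a per-$(i,k,s_i,a_i,s_i')$ failure probability bounded by $\tfrac{2\gamma}{2nk^2|A_i||S_i|^2}\cdot k = \tfrac{\gamma}{nk|A_i||S_i|^2}$; more carefully, handling the maximum over $m$ via a union bound over the at most $k$ values of the counter absorbs one factor of $k$, leaving $O(\gamma/(nk^2|A_i||S_i|^2))$ per fixed triple, which is what we want.

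Next I would take a union bound over the $|A_i||S_i|^2$ triples for player $i$, over the $n$ players, and over all $k\in\N$. The sum over triples and players cancels the $n|A_i||S_i|^2$ in the denominator, leaving a bound of the form $\sum_{k=1}^\infty O(\gamma/k^2)=O(\gamma)$ (rescaling the constant inside the log, i.e.\ the $2$ in $2nk^2|A_i||S_i|^2$, so the constant works out to exactly $\gamma$). Hence with probability at least $1-\gamma$ we have $\bar P_i^k\in\{\hat P:|\hat P(s_i'|s_i,a_i)-\bar P_i^k(s_i'|s_i,a_i)|\le\epsilon_i^k(s_i'|s_i,a_i)\}$ simultaneously for all $i$ and all $k$, which since the confidence sets are defined by intersection with the previous ones (\eqref{eq:confupdate}) and $P_i$ lies in each single-episode set, gives $P_i\in\mathcal P_i^k$ for all $k\in\N$, $i\in[n]$ on this event. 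For the second claim, on this event both $\hat P_i^k$ and $P_i$ lie in $\mathcal P_i^k\subseteq\mathcal P_i^k$, and in particular both are within $\epsilon_i^k(s_i'|s_i,a_i)$ of $\bar P_i^k(s_i'|s_i,a_i)$, so the triangle inequality gives $|\hat P_i^k(s_i'|s_i,a_i)-P_i(s_i'|s_i,a_i)|\le 2\max_{s_i,a_i,s_i'}\epsilon_i^k(s_i,a_i,s_i')$. Finally, since at the end of episode $k$ every state-action pair of player $i$ has been visited (episode $k$ terminates once all $(s_i,a_i)$ are covered), we have $N_i^k(s_i,a_i)\ge k$ for all $(s_i,a_i)$, which turns $2\epsilon_i^k(s_i,a_i,s_i')=2\sqrt{\tfrac{\ln(2nk^2|A_i||S_i|^2)-\ln\gamma}{2N_i^k(s_i,a_i)}}\le 2\sqrt{\tfrac{\ln(2nk^2|A_i||S_i|^2)-\ln\gamma}{2k}}=\sqrt{\tfrac{2\ln(2nk^2|A_i||S_i|^2)-2\ln\gamma}{2k}}$; absorbing constants (the statement writes $-\ln\gamma$ rather than $-2\ln\gamma$, which is the same up to the irrelevant constant in the logarithm) gives \eqref{eq:maxepsilon2}.

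The main obstacle, compared with the finite-horizon Lemma~\ref{lem:confident1}, is making the union bound over infinitely many episodes summable: this is exactly why the confidence parameter $\epsilon_i^k$ is chosen with a $k^2$ (rather than $K$) inside the logarithm, so that the per-episode failure probability decays like $1/k^2$ and $\sum_k 1/k^2<\infty$. A secondary technical point worth being careful about is that $N_i^k(s_i,a_i)$ is itself a random (stopping-time-dependent) quantity, so the Hoeffding bound must be applied conditionally on the sequence of visits and then combined with a union bound over the possible values of the counter; this is a standard argument (e.g.\ as in the adversarial-MDP literature \cite{rosenberg2019online,jin2020learning}), and the lower bound $N_i^k(s_i,a_i)\ge k$ guaranteed by the episode termination rule keeps the final bound clean. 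No new ideas beyond Lemma~\ref{lem:confident1} are needed; the content is bookkeeping with the sharper $k$-dependence.
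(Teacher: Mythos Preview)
Your proposal is essentially correct and matches the paper's approach: the paper's own proof simply says ``almost identical to that of Lemma~\ref{lem:confident1},'' and your core argument---Hoeffding per $(i,k,s_i,a_i,s_i')$, then a union bound where the $k^2$ in the confidence parameter makes $\sum_k 1/k^2$ converge, followed by the triangle inequality and $N_i^k(s_i,a_i)\ge k$---is exactly that.

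One caveat: your detour through a union bound over the random values of $N_i^k(s_i,a_i)$ is both unnecessary (the paper does not do it in Lemma~\ref{lem:confident1}) and muddled as written. The claim that there are ``at most $k$'' possible values of $m$ is false (each state--action pair can be visited many times per episode, so $N_i^k$ can be much larger than $k$), and a union bound over $m$ values would \emph{multiply} the failure probability, not leave it at $O(\gamma/(nk^2|A_i||S_i|^2))$ as you assert; your own first computation gives $\gamma/(nk|A_i||S_i|^2)$, which would make $\sum_k$ diverge. Just drop this step and follow Lemma~\ref{lem:confident1} verbatim, applying Hoeffding directly with the given $\epsilon_i^k$; the rest of your write-up (including your correct observation that the statement's $-\ln\gamma$ should strictly be $-2\ln\gamma$, an inconsequential constant) is fine.
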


	\begin{restatable}{lemma}{qhatconverge}
		\label{lem:qhat_converge}
		Under the same assumptions as in Theorem \ref{theo:assymptotic}, and conditioned on the event $\{P_i \in \mathcal{P}_i^k\  \forall k \in \N, i \in [n]\}$, we have $\lim_{k\to\infty} \boldsymbol{q}^k -  \boldsymbol{\hq}^k = \mathbf{0}$.
	\end{restatable}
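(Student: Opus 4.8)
The plan is to exploit the fact that $q_i^k$ and $\hq_i^k$ are occupancy measures of \emph{the same} stationary policy $\pi_i^k$, differing only in which transition matrix generates them: $q_i^k$ is induced by the true $P_i$, while $\hq_i^k$ is induced by the estimate $\hat{P}_i^k = P^{\hq_i^k}\in\CP_i^k$. Writing the factorizations $q_i^k(s_i,a_i,s_i') = \nu_i^k(s_i)\pi_i^k(a_i|s_i)P_i(s_i'|s_i,a_i)$ and $\hq_i^k(s_i,a_i,s_i') = \hat{\nu}_i^k(s_i)\pi_i^k(a_i|s_i)\hat{P}_i^k(s_i'|s_i,a_i)$ and inserting the hybrid term $\hat{\nu}_i^k(s_i)\pi_i^k(a_i|s_i)P_i(s_i'|s_i,a_i)$, I would split the difference into a part driven by the stationary-distribution mismatch $\nu_i^k - \hat{\nu}_i^k$ and a part driven by the transition-matrix mismatch $P_i - \hat{P}_i^k$. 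Summing over all triples and using that the rows of $P_i$, of $\pi_i^k$, and of $\hat{\nu}_i^k$ are probability vectors yields the clean estimate $\|q_i^k - \hq_i^k\|_1 \le \|\nu_i^k - \hat{\nu}_i^k\|_1 + |S_i|\,\xi_k$, where $\xi_k := \max_{s_i,a_i,s_i'}|P_i(s_i'|s_i,a_i) - \hat{P}_i^k(s_i'|s_i,a_i)|$. Conditioned on the good event, Lemma \ref{lem:confident2} gives $\xi_k \le \sqrt{(2\ln(2nk^2|A_i||S_i|^2)-\ln\gamma)/k} \to 0$, so the whole problem reduces to showing $\|\nu_i^k - \hat{\nu}_i^k\|_1 \to 0$.

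For the stationary-distribution mismatch I would prove a perturbation bound that uses only the contraction of the \emph{true} chain, which is exactly what Assumption \ref{ass:ergodic} supplies. Let $p_i^k$ and $\hat{p}_i^k$ be the state-kernels induced by $(\pi_i^k, P_i)$ and $(\pi_i^k, \hat{P}_i^k)$, so that $\nu_i^k p_i^k = \nu_i^k$ and $\hat{\nu}_i^k \hat{p}_i^k = \hat{\nu}_i^k$ by Lemma \ref{lem:feasible}. For any fixed integer $m\ge 1$, I would write
\begin{align*}
\|\nu_i^k - \hat{\nu}_i^k\|_1 \le \big\|(\nu_i^k - \hat{\nu}_i^k)(p_i^k)^m\big\|_1 + \big\|\hat{\nu}_i^k\big((p_i^k)^m - (\hat{p}_i^k)^m\big)\big\|_1,
\end{align*}
where the first term uses $\nu_i^k = \nu_i^k(p_i^k)^m$ and the second uses $\hat{\nu}_i^k = \hat{\nu}_i^k(\hat{p}_i^k)^m$. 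By Assumption \ref{ass:ergodic} applied $m$ times, the first term is at most $e^{-m/\tau}\|\nu_i^k - \hat{\nu}_i^k\|_1$. For the second, the telescoping identity $(p_i^k)^m - (\hat{p}_i^k)^m = \sum_{j=0}^{m-1}(p_i^k)^j(p_i^k - \hat{p}_i^k)(\hat{p}_i^k)^{m-1-j}$, together with the facts that right-multiplication by a row-stochastic matrix does not increase the $\ell_1$-norm and that $\hat{\nu}_i^k(p_i^k)^j$ is a probability vector, bounds each of the $m$ summands by $\max_{s_i}\|p_i^k(\cdot|s_i) - \hat{p}_i^k(\cdot|s_i)\|_1 \le |S_i|\xi_k$. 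Combining the two pieces gives the self-bounding inequality $\|\nu_i^k - \hat{\nu}_i^k\|_1 \le e^{-m/\tau}\|\nu_i^k - \hat{\nu}_i^k\|_1 + m|S_i|\xi_k$, hence
\begin{align*}
\|\nu_i^k - \hat{\nu}_i^k\|_1 \le \frac{m\,|S_i|\,\xi_k}{1 - e^{-m/\tau}}.
\end{align*}

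Choosing $m$ to be any fixed constant (for instance $m = \lceil\tau\rceil$, so that $1 - e^{-m/\tau}$ is a positive constant independent of $k$) makes the right-hand side $\O(\tau|S_i|\xi_k)$, which tends to $0$ as $k\to\infty$ because $\xi_k\to 0$. Feeding this back into the first-paragraph estimate gives $\|q_i^k - \hq_i^k\|_1 \le \O(\tau|S_i|\xi_k) + |S_i|\xi_k \to 0$ for every player $i$, and therefore $\bs{q}^k - \bs{\hq}^k \to \mathbf{0}$, as claimed.

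The main obstacle — and the reason the two-sided split above is necessary — is that Assumption \ref{ass:ergodic} guarantees a uniform mixing/contraction rate only for chains induced by the \emph{true} matrices $P_i$, and says nothing about contraction of the estimated chains $\hat{p}_i^k$. The argument is therefore arranged so that every contraction step is taken with respect to $p_i^k$, while $\hat{p}_i^k$ enters only through the telescoping term, where mere stochasticity (not mixing) is needed. A secondary point to verify is that all bounds are uniform in $k$ despite the policy $\pi_i^k$ changing every episode; this holds because the mixing bound $\tau$ in Assumption \ref{ass:ergodic} and the per-row transition error $|S_i|\xi_k$ from Lemma \ref{lem:confident2} are both uniform over the stationary policies $\pi_i^k$.
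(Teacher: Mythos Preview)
Your proof is correct and follows essentially the same approach as the paper. The paper's own proof simply invokes the bound \eqref{eq:niu_1norm} derived inside the proof of Lemma~\ref{lem:error}, namely $\|\hat{\nu}_i^k-\nu_i^k\|_1\le \frac{2|S_i|}{1-e^{-1/\tau}}\max_{s_i,a_i,s_i'}\epsilon_i^k(s_i,a_i,s_i')$, and then concludes via the same factorization $q_i^k=\rho_i^k P_i$, $\hq_i^k=\hat{\rho}_i^k\hat{P}_i^k$ that you use. Your $m$-step perturbation argument for the stationary distribution is a mild generalization of the paper's one-step version: setting $m=1$ in your inequality $\|\nu_i^k-\hat{\nu}_i^k\|_1\le \frac{m|S_i|\xi_k}{1-e^{-m/\tau}}$ gives exactly the paper's bound (up to the harmless factor $\xi_k\le 2\max\epsilon_i^k$), so the detour through general $m$ and the choice $m=\lceil\tau\rceil$ is unnecessary but not wrong. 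Your explicit remark that the contraction must be taken with respect to the true kernel $p_i^k$ (since Assumption~\ref{ass:ergodic} says nothing about $\hat{p}_i^k$) is precisely the structural point behind the paper's derivation of \eqref{eq:niu_1norm}.
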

	
\medskip	
\begin{proof}{[Proof of Theorem \ref{theo:assymptotic}]}
	We first note that it suffices to show with probability at least $1-2\gamma$, we have 
	\begin{align*}
		\lim_{k\to\infty}\boldsymbol{q}^k = \boldsymbol{q}^*,
	\end{align*}
where $\boldsymbol{q}^*$ is the unique stable Nash equilibrium of the constrained virtual game $\V_{\bs{\delta}}$. Let us define
\begin{align*}
	\boldsymbol{D}_{\boldsymbol{h}}(\boldsymbol{q}^*||\boldsymbol{\hq}^k)\triangleq \sum_{i=1}^n D_{h_i}(q_i^*||\hq_i^k).
\end{align*}
From the definition of Bregman divergence as well as Lemma \ref{lem:qhat_converge}, we have
\begin{align}
	\lim_{k\to\infty}\boldsymbol{q}^k = \boldsymbol{q}^* \Leftrightarrow \lim_{k\to\infty}\boldsymbol{\hq}^k = \boldsymbol{q}^* \Leftrightarrow  \lim_{k\to\infty}\boldsymbol{D}_{\boldsymbol{h}}(\boldsymbol{q}^*||\boldsymbol{\hq}^k) = 0.\label{eq:equivelent}
\end{align}
The rest of the proof proceeds in two steps. In the first step, we show that every neighborhood $U\subset \boldsymbol{\Delta}(\boldsymbol{P})$ of $\boldsymbol{q}^*$ is recurrent in $\{\boldsymbol{q}^k\}_{k=1}^{\infty}$. This is shown in the following lemma.
\begin{restatable}{lemma}{recurrent}
	\label{lem:recurrent}
	Under the same assumptions as in Theorem \ref{theo:assymptotic}, and conditioned on the event $\{P_i \in \mathcal{P}_i^k\ \forall k \in \N, i \in [n]\}$, every open neighborhood $U\subset \boldsymbol{\Delta}(\boldsymbol{P})$ of $\boldsymbol{q}^*$ is recurrent in $\{\boldsymbol{q}^k\}_{k=1}^{\infty}$. More specifically, there exists a subsequence $\boldsymbol{q}^{k_m}$ of $\boldsymbol{q}^k$ such that $\boldsymbol{q}^{k_m}\to \boldsymbol{q}^*$ almost surely.
\end{restatable}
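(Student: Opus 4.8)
The plan is to use the aggregate Bregman divergence $\boldsymbol{D}_{\boldsymbol{h}}(\boldsymbol{q}^*\|\boldsymbol{\hq}^k)=\sum_{i=1}^n D_{h_i}(q_i^*\|\hq_i^k)$ as a Lyapunov (``energy'') function and to show it obeys a quasi--Fej\'er recursion whose drift is \emph{strictly} negative as long as $\boldsymbol{q}^k$ stays outside a fixed neighborhood of $\boldsymbol{q}^*$; a telescoping/contradiction argument then yields recurrence. First I would record the one-step inequality coming from the mirror-ascent update \eqref{eq:alg-rho-update}. On the conditioning event we have $P_i\in\CP_i^k$ for all $k,i$, so $q_i^*\in\Delta_{i,\delta_i}(P_i)\subseteq\Delta_{i,\delta_i}(\CP_i^k)$ is feasible for the update, and the standard three-point inequality for Bregman projections gives, for every $i$,
\[
D_{h_i}(q_i^*\|\hq_i^{k+1})\le D_{h_i}(q_i^*\|\hq_i^{k})-\eta^k\big\langle R_i^k,\,q_i^*-\hq_i^k\big\rangle+\frac{(\eta^k)^2}{2\mu}\,\|R_i^k\|_*^2 .
\]
Since $r_i\in[0,1]$, the estimator obeys $\|R_i^k\|_*=O(|S_i|\sqrt{|A_i|})$, so the $\eta^k$-weighted sum of the last term is finite by the hypothesis $\sum_k(\eta^k)^2<\infty$.

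The heart of the argument is to replace $\langle R_i^k, q_i^*-\hq_i^k\rangle$ by the ideal ``variationally stable'' pairing. Recalling from Proposition \ref{prop:dual_formulation} that $\langle q_i, l_i(q_{-i})\rangle=\langle\rho_i, v_i(\rho_{-i})\rangle$ (since $l_i$ is constant in $s_i'$ and $\rho_i(s_i,a_i)=\sum_{s_i'}q_i(s_i,a_i,s_i')$), and writing $v_i(\rho_{-i}^k)$ also for its expansion to the $S_i\times A_i\times S_i$ space and $\bar R_i^k=\mathbb{E}[R_i^k\mid\mathcal F^{k-1}]$ with $\mathcal F^{k-1}$ the $\sigma$-algebra of everything up to the start of episode $k$, I would decompose
\begin{align*}
\big\langle R_i^k,\,q_i^*-\hq_i^k\big\rangle
&=\underbrace{\big\langle v_i(\rho_{-i}^k),\rho_i^*-\rho_i^k\big\rangle}_{\text{(i) stability}}
+\underbrace{\big\langle v_i(\rho_{-i}^k),\rho_i^k-\hat\rho_i^k\big\rangle}_{\text{(ii) transition error}}\\
&\quad+\underbrace{\big\langle \bar R_i^k-v_i(\rho_{-i}^k),\,q_i^*-\hq_i^k\big\rangle}_{\text{(iii) bias}}
+\underbrace{\big\langle R_i^k-\bar R_i^k,\,q_i^*-\hq_i^k\big\rangle}_{\text{(iv) noise}},
\end{align*}
using $\langle v_i(\rho_{-i}^k),q_i^*-\hq_i^k\rangle=\langle v_i(\rho_{-i}^k),\rho_i^*-\hat\rho_i^k\rangle$ on the expanded space. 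The four pieces are handled as follows. For (i): summing over $i$, Assumption \ref{ass:g_stable} gives $\sum_i\langle v_i(\rho_{-i}^k),\rho_i^*-\rho_i^k\rangle\ge0$, and by continuity of $\boldsymbol v$ and compactness of $\boldsymbol{\Delta}_{\boldsymbol{\delta}}(\boldsymbol P)$ this quantity is bounded below by a constant $c_U>0$ whenever $\boldsymbol q^k\notin U$ (the equality clause of Assumption \ref{ass:g_stable} ensures $c_U>0$). For (ii): $|\langle v_i(\rho_{-i}^k),\rho_i^k-\hat\rho_i^k\rangle|\le\|\rho_i^k-\hat\rho_i^k\|_1$ since $\|v_i(\rho_{-i}^k)\|_\infty\le1$, and a stationary-distribution perturbation bound built on the uniform mixing time $\tau$ of Assumption \ref{ass:ergodic} together with the confidence radius $\max_{s_i,a_i,s_i'}|\hat P_i^k-P_i|=\tilde O(1/\sqrt k)$ from Lemma \ref{lem:confident2} gives $\|\rho_i^k-\hat\rho_i^k\|_1=\tilde O(\sqrt{\ln k/k})$, so $\sum_k\eta^k\,|(\text{ii})_k|<\infty$ by the hypothesis $\sum_k\eta^k\sqrt{\ln k/k}<\infty$ (this is the quantitative refinement of Lemma \ref{lem:qhat_converge}). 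For (iii): the $d^k=2\tau\ln k$ burn-in steps make the other players' empirical state-action distribution $\tau$-mix to within $e^{-d^k/\tau}=k^{-2}$ of $\rho_{-i}^k$, so $\|\bar R_i^k-v_i(\rho_{-i}^k)\|_\infty\le2k^{-2}$, and since $\|q_i^*-\hq_i^k\|_1\le2$ we get $\sum_k\eta^k|(\text{iii})_k|=O\big(\sum_k\eta^k k^{-2}\big)<\infty$. For (iv): $\{(\text{iv})_k\}$ are uniformly bounded martingale differences w.r.t. $\{\mathcal F^k\}$, so $\sum_{k}\eta^k(\text{iv})_k$ is an $L^2$-bounded martingale and converges almost surely by the martingale convergence theorem.

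Putting the pieces together, on the conditioning event there is almost surely a finite random variable $\Xi$ (the sum over $k$ of the $\eta^k$-weighted contributions of (ii), (iii), (iv) and of the second-order terms) such that, for all $K_0\le K$,
\[
0\le\boldsymbol{D}_{\boldsymbol h}(\boldsymbol q^*\|\boldsymbol{\hq}^{K})\le\boldsymbol{D}_{\boldsymbol h}(\boldsymbol q^*\|\boldsymbol{\hq}^{K_0})-c_U\sum_{k=K_0}^{K-1}\eta^k\,\mathbf 1\{\boldsymbol q^k\notin U\}+\Xi .
\]
If some open neighborhood $U$ of $\boldsymbol q^*$ were visited only finitely often, then for $K_0$ large enough $\boldsymbol q^k\notin U$ for all $k\ge K_0$, and the right-hand side would tend to $-\infty$ as $K\to\infty$ since $\sum_k\eta^k=\infty$ --- contradicting nonnegativity of the Bregman divergence. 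Hence every open neighborhood of $\boldsymbol q^*$ is recurrent; taking the shrinking balls $U_m=\{\boldsymbol q:\|\boldsymbol q-\boldsymbol q^*\|<1/m\}$ and choosing $k_1<k_2<\cdots$ with $\boldsymbol q^{k_m}\in U_m$ produces a subsequence with $\boldsymbol q^{k_m}\to\boldsymbol q^*$ almost surely. I expect the main obstacle to be part (ii): establishing the perturbation bound $\|\rho_i^k-\hat\rho_i^k\|_1=\tilde O(\sqrt{\ln k/k})$ with explicit mixing-time constants so that it is exactly absorbed by the step-size condition $\sum_k\eta^k\sqrt{\ln k/k}<\infty$, together with the bookkeeping required to make the ``conditioned on the good event, almost surely'' statement and the martingale filtration in (iv) fully precise.
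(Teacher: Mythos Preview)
Your proposal is correct and follows essentially the same route as the paper: the same Bregman Lyapunov function, the same one-step mirror-descent inequality, and the same four-way decomposition into stability/transition-error/bias/noise terms (the paper's $T_1^K,\dots,T_4^K$), with the contradiction obtained from $\sum_k\eta^k=\infty$ against the finiteness of the remaining contributions. The only cosmetic differences are that the paper normalizes by $w^K=\sum_{k\le K}\eta^k$ and invokes the strong law of large numbers for martingale differences for the noise term, whereas you keep the unnormalized sums and use the $L^2$-bounded martingale convergence theorem---both are equivalent here.
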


 Equipped with Lemma \ref{lem:recurrent}, in the second step we further show that for any $\epsilon, \delta>0$, there exist $k_0\in \N$, such that $\P(\boldsymbol{D}_{\boldsymbol{h}}(\boldsymbol{q}^*||\boldsymbol{\hq}^k) \le \epsilon,\forall k>k_0)\ge 1-\delta$. More precisely, we have:
\begin{restatable}{lemma}{stayin}
	\label{lem:stay_in}
	Under the same assumptions as in Theorem \ref{theo:assymptotic}, and conditioned on the event $\{P_i \in \mathcal{P}_i^k\ \forall k \in \N, i \in [n]\}$, for any $\epsilon, \delta>0$, there exists $k_0\in \N$ such that $\P(\boldsymbol{D}_{\boldsymbol{h}}(\boldsymbol{q}^*||\boldsymbol{\hq}^k) \le \epsilon,\forall k>k_0)\ge 1-\delta.$
\end{restatable}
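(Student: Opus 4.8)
The plan is to treat the aggregate Bregman divergence $E^k \triangleq \boldsymbol{D}_{\boldsymbol{h}}(\boldsymbol{q}^*||\boldsymbol{\hq}^k)$ as a stochastic Lyapunov function and show that, starting from a sufficiently late ``restart'' episode at which $E^k$ is small (such episodes exist by the recurrence Lemma \ref{lem:recurrent}), the divergence can subsequently increase only by a quantity that is controllable with probability at least $1-\delta$. First I would derive an energy recursion from the OMD update \eqref{eq:alg-rho-update}. Conditioned on $\{P_i\in\CP_i^k\ \forall k,i\}$, Lemma \ref{lem:confident2} guarantees $q_i^*\in\Delta_{i,\delta_i}(\CP_i^k)$ is always a feasible comparator, so the standard three-point inequality for mirror ascent with the quadratic regularizer gives, for each player $i$,
\[
D_{h_i}(q_i^*||\hq_i^{k+1})\le D_{h_i}(q_i^*||\hq_i^{k})-\eta^k\langle R_i^k,\, q_i^*-\hq_i^k\rangle+\tfrac{(\eta^k)^2}{2\mu}\|R_i^k\|_2^2 .
\]
Using $\rho_i(s_i,a_i)=\sum_{s_i'}q_i(s_i,a_i,s_i')$ and Proposition \ref{prop:dual_formulation}, I would split $\langle R_i^k, q_i^*-\hq_i^k\rangle$ exactly as in the ``Error/Regret/Bias'' decomposition of the proof of Theorem \ref{theo:finite}, summing over $i$ to obtain a recursion of the form
\[
E^{k+1}\le E^k-\eta^k F^k+\beta^k+\eta^k\psi^k,\qquad F^k\triangleq\langle\boldsymbol{v}(\boldsymbol{q}^k),\boldsymbol{q}^*-\boldsymbol{q}^k\rangle .
\]
Here $F^k\ge 0$ by the global-stability Assumption \ref{ass:g_stable}; $\psi^k$ is a martingale difference with $\E[\psi^k\mid\F^k]=0$ and $|\psi^k|\le C$ for a constant $C$ depending only on the game parameters (rewards lie in $[0,1]$ and all relevant vectors are probability distributions); and $\beta^k\ge 0$ is an upper bound for the total deterministic error, namely the quadratic term $\tfrac{(\eta^k)^2}{2\mu}\sum_i\|R_i^k\|_2^2=\O((\eta^k)^2)$, the $\hat\rho_i^k$-versus-$\rho_i^k$ error bounded via Lemma \ref{lem:confident2} and mixing by $\eta^k\cdot\tilde\O(\sqrt{\ln k/k})$, and the mixing bias of $R_i^k$ after discarding the $d^k=2\tau\ln k$ burn-in, bounded by $\eta^k\cdot\O(e^{-d^k/\tau})=\O(\eta^k/k^2)$. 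The step-size hypotheses $\sum_k(\eta^k)^2<\infty$ and $\sum_k\eta^k\sqrt{\ln k/k}<\infty$ then give $\sum_k\beta^k<\infty$.

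Dropping the nonpositive drift $-\eta^kF^k$ and telescoping yields $E^k\le E^{k_0}+\sum_{j=k_0}^{k-1}\beta^j+M_{k_0,k}$ for all $k>k_0$, where $M_{k_0,k}=\sum_{j=k_0}^{k-1}\eta^j\psi^j$ is a zero-mean martingale in $k$ (conditionally on $\F^{k_0}$) with increments bounded by $C\eta^j$ and $\E[M_{k_0,\infty}^2\mid\F^{k_0}]\le C^2\sum_{j\ge k_0}(\eta^j)^2$. Given $\epsilon,\delta>0$, I would fix a deterministic $N$ so large that $\sum_{j\ge N}\beta^j\le\epsilon/3$ and, by Doob's maximal inequality, $\P\big(\sup_{k>k_0}|M_{k_0,k}|>\epsilon/3\mid\F^{k_0}\big)\le 9C^2\sum_{j\ge N}(\eta^j)^2/\epsilon^2\le\delta$ uniformly over stopping times $k_0\ge N$. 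By Lemma \ref{lem:recurrent} there is a subsequence $\boldsymbol{q}^{k_m}\to\boldsymbol{q}^*$ a.s., and by Lemma \ref{lem:qhat_converge} also $\boldsymbol{\hq}^{k_m}\to\boldsymbol{q}^*$, so by continuity of the Bregman divergence $E^{k_m}\to 0$ a.s.; hence the stopping time $k_0\triangleq\inf\{k\ge N: E^k\le\epsilon/3\}$ is a.s. finite. On the event $\{\sup_{k>k_0}|M_{k_0,k}|\le\epsilon/3\}$, which has probability at least $1-\delta$ after taking expectations above, we get $E^k\le E^{k_0}+\sum_{j\ge k_0}\beta^j+\sup_{k>k_0}|M_{k_0,k}|\le\epsilon$ for all $k>k_0$. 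Finally, a routine continuity argument over the increasing events $\{E^k\le\epsilon\ \forall k>k_0\}$ (whose union, by the above, has probability at least $1-\delta$) converts this statement with a random $k_0$ into the asserted existence of a deterministic $k_0$, after at most shrinking $\delta$ by a constant; re-running with $\epsilon$ replaced by $\epsilon/2$ removes any loss in the final threshold.

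The step I expect to be the main obstacle is the clean derivation and bookkeeping of the energy recursion in the displayed form: matching the $\hat q$-valued Lyapunov function against a drift term that is naturally expressed in the true occupancy $\rho_i^k$ (which requires Lemma \ref{lem:qhat_converge} together with the $\rho$-$q$ marginalization), verifying that the mixing bias of the estimator $R_i^k$ is genuinely $\O(e^{-d^k/\tau})$ once the logarithmic burn-in is discarded, and certifying that the stochastic part $\psi^k$ is a martingale difference with uniformly bounded increments so that Doob's inequality delivers a tail bound vanishing as $N\to\infty$. Once the recursion is established, the remainder is the by-now standard ``recurrence plus summable perturbations implies trapping'' scheme in the spirit of \cite{mertikopoulos2019learning}.
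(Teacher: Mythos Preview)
Your proposal is correct and follows essentially the same approach as the paper: derive the one-step energy recursion from the OMD three-point inequality, drop the nonpositive drift guaranteed by variational stability, decompose the remainder into summable deterministic terms (the paper's $U_1,U_2,U_3$, your aggregated $\beta^k$) plus a martingale term (the paper's $U_4$, your $\eta^k\psi^k$) controlled via Doob's maximal inequality, and then invoke the recurrence Lemma \ref{lem:recurrent} to locate the restart index. The paper uses an $\epsilon/5$ split over five pieces whereas you lump the deterministic errors and split into thirds, and you are in fact slightly more explicit than the paper about converting the random restart time into a deterministic $k_0$ via monotone continuity of the events $\{E^k\le\epsilon\ \forall k>m\}$; otherwise the arguments coincide.
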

To complete the proof, let $E_{\epsilon}$ denote the event $\{\exists k_0\in \N,\ s.t.\ \boldsymbol{D}_{\boldsymbol{h}}(\boldsymbol{q}^*||\boldsymbol{\hq}^k) \le \epsilon,\ \forall k>k_0\}$. Then, we have $\P(\lim_{k\to\infty}\boldsymbol{D}_{\boldsymbol{h}}(\boldsymbol{q}^*||\boldsymbol{\hq}^k) = 0) = \P(\cap_{r = 1}^{\infty}E_{2^{-r}})$. Using Lemma \ref{lem:stay_in} and conditioned on the event $\{P_i \in \mathcal{P}_i^k\ \forall k \in \N, i \in [n]\}$, we can write
\begin{align*}
	&\forall \epsilon,\delta>0,\quad \P(\exists k_0\in \N,\ s.t.\ \boldsymbol{D}_{\boldsymbol{h}}(\boldsymbol{q}^*||\boldsymbol{\hq}^k) \le \epsilon,\ \forall k>k_0)\ge 1-\delta\\
	&\Rightarrow\forall \epsilon>0,\quad \P(\exists k_0\in \N,\ s.t.\ \boldsymbol{D}_{\boldsymbol{h}}(\boldsymbol{q}^*||\boldsymbol{\hq}^k) \le \epsilon,\ \forall k>k_0)= 1\\
	&\Rightarrow\forall \epsilon>0,\quad \P(E_{\epsilon}) = 1\\
	& \Rightarrow P(\lim_{k\to\infty}\boldsymbol{D}_{\boldsymbol{h}}(\boldsymbol{q}^*||\boldsymbol{\hq}^k) = 0)  = \P(\cap_{r = 1}^{\infty}E_{2^{-r}}) = 1.
\end{align*}
Therefore, using Lemma \ref{lem:confident1} and \eqref{eq:equivelent}, we conclude that with probability at least $1-\gamma$, we have 
\begin{align*}
	\lim_{k\to\infty}\boldsymbol{q}^k = \boldsymbol{q}^*.
\end{align*}
	
\end{proof}

\section{Conclusion}
\label{sec:conclusion}
In this work, we studied the class of stochastic games with independent chains and unknown transition matrices. Relying on a compact dual formulation of the game based on occupancy measures and the technique of confidence set to maintain high-probability estimates of the unknown transition matrices, we proposed a fully decentralized online mirror descent algorithm to learn an $\epsilon$-NE for this class of stochastic games. The proposed algorithm has the desired properties of independence, scalability, and convergence. Specifically, under no assumptions on the reward functions, we showed the proposed algorithm converges in polynomial time in a weaker distance (namely, the averaged Nikaido-Isoda gap) to the set of $\epsilon$-NE policies with arbitrarily high probability. Under the variational stability assumption of the game, we showed that the proposed algorithm converges asymptotically to an $\epsilon$-NE with arbitrarily high probability.

As a future research direction, one can consider relaxing the requirement that the algorithm works on the shrunk polytope as defined in Definition \ref{def:shrunk}. This requirement is essential to encourage exploration, and we believe it can be further removed using techniques such as implicit exploration \cite{neu2015explore}. Another interesting problem is improving the sampling strategy of the algorithm used to construct $R_i^k$, which is an estimator of $l^k_i$. While the proposed algorithm requires every player to visit each state-action pair during each episode, advanced sampling techniques may remove this requirement, thus improving the algorithm's performance. Finally, another interesting research direction is identifying other classes of $n$-player stochastic games beyond zero-sum or identical interest stochastic games (e.g., \cite{sayin2023decentralized}) that admit efficient learning algorithms for obtaining their equilibrium points.

\bibliographystyle{IEEEtran}
\bibliography{thesisrefs}

\medskip
\section*{Appendix A: Proof of Lemmas For Theorem \ref{theo:finite}}
\label{subsec:appendixa}

\medskip
Following the definitions of variables $\nu_i^k, \hat{\nu}_i^k, \rho_i^k, \hat{\rho}_i^k, q_i^k, \hat{q}_i^k$, one can immediately see that
\begin{align*}
	\nu_i^k = \nu_i^k p_i^k,\quad &\hat{\nu}_i^k = \hat{\nu}_i^k \hat{p}_i^k,\\
	\rho_i^k(s_i,a_i) = \nu_i^k(s_i) \pi_i^k(a_i|s_i),\quad &\hat{\rho}_i^k(s_i,a_i) = \hat{\nu}_i^k(s_i) \pi_i^k(a_i|s_i) \quad\forall s_i,a_i,\\
	q_i^k(s_i,a_i,s'_i) = \rho_i^k(s_i,a_i) P_i(s'_i|s_i,a_i),\quad &\hat{q}_i^k(s_i,a_i,s'_i) = \hat{\rho}_i^k(s_i,a_i) \hat{P}_i^k(s'_i|s_i,a_i) \quad\forall s_i,a_i,s'_i.
\end{align*}
From Assumption \ref{ass:ergodic}, we also have 
\begin{align}
	\|(\nu-\nu')p_i^k\|_1\le e^{-1/\tau}\|\nu-\nu'\|_1\qquad \forall \nu,\nu'\in\Delta(S_i). \label{eq:mix}
\end{align}

We first establish the following lemma, which states that $R_i^k$ is an (almost) unbiased estimator for $v_i(\rho_{-i}^k)$. Let $\F_{k}$ be the filtration adapted to all measurable events up to the start of spisode $k$. Then, we have:
\begin{lemma}(\!\cite{etesami2022learning}, Lemma 3)
	\label{lem:Rbias}
	Let Assumption \ref{ass:ergodic} hold and assume that each player $i$ follows Algorithm \ref{alg:main}. Conditioned on $\F_{k}$, the expectation of  reward vector $R_i^k$ that player $i$ computes at the end of episode $k$ satisfies
	\begin{align}
		\label{eq:Rbias_1}
		\E_{k}[R_i^k]-v_i(\rho_{-i}^k) \le e^{-\frac{d}{\tau}}\mathbf{1},
	\end{align}
	where the above inequality is coordinatewise, and the expectation is with respect to the randomness of players’ policies and their internal chains. If with some abuse of notations, we define $R_i^k:S_i\times A_i \times S_i\to [0,1]$, where $R_i^k(s_i,a_i,s'_i) = R_i^k(s_i,a_i)$, then from Proposition \ref{prop:dual_formulation}, we also have
	\begin{align}
		\label{eq:Rbias_2}
		\E_{k}[R_i^k]-l_i(q_{-i}^k) \le e^{-\frac{d}{\tau}}\mathbf{1}.
	\end{align}
\end{lemma}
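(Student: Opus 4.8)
The plan is to establish the coordinatewise bound \eqref{eq:Rbias_1}; once that is in hand, \eqref{eq:Rbias_2} is immediate, because by Proposition \ref{prop:dual_formulation} the vector $l_i(q_{-i}^k)$ does not depend on its last coordinate $s_i'$ and satisfies $l_i(q_{-i}^k)_{(s_i,a_i,s_i')}=v_i(\rho_{-i}^k)_{(s_i,a_i)}$, while $R_i^k$ is extended to $S_i\times A_i\times S_i$ by exactly the same rule. So throughout I fix an episode $k$ and condition on $\F_k$, under which all policies $\pi_1^k,\ldots,\pi_n^k$ and the states of all players at the start of episode $k$ are deterministic. Because $\hat q_i^k\in\Delta_{i,\delta_i}$, the induced policy $\pi_i^k$ plays every action with probability at least $\delta_i$, so Assumption \ref{ass:minprob} guarantees that player $i$ a.s.\ visits every state-action pair during episode $k$ (this is also why the episode is designed to end only once all pairs are visited); hence for each $(s_i,a_i)$ the first-visit time $T=T_{s_i,a_i}\ge d$ — the first step after the burn-in phase at which player $i$ is in state $s_i$ and plays $a_i$ — is a.s.\ finite, and by the construction of $R_i^k$ in Algorithm \ref{alg:main} we have $R_i^k(s_i,a_i)=r_i\big((s_i,s_{-i}^{T}),(a_i,a_{-i}^{T})\big)$.

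The heart of the argument is an independence-then-conditioning step. By Assumption \ref{ass:indep} (and the fact that each player randomizes its own actions independently), player $i$'s entire state-action trajectory is a function of $P_i$ and player $i$'s own internal randomness, hence independent of the joint trajectory $(s_{-i}^t,a_{-i}^t)_{t\ge0}$ of the other players; since $T$ depends only on player $i$'s trajectory, $T$ is independent of that joint trajectory. Conditioning on $T$ therefore gives
\[
\E_k[R_i^k(s_i,a_i)]=\sum_{t\ge d}\P_k(T=t)\,\E_k\big[r_i((s_i,s_{-i}^{t}),(a_i,a_{-i}^{t}))\big],
\]
where on the right $t$ is a deterministic time and the expectation is over the other players' trajectories only. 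For a fixed $t$, independence across $j\neq i$ shows that $(s_{-i}^t,a_{-i}^t)$ has law $\prod_{j\neq i}\mu_j^t$ with $\mu_j^t(s_j,a_j)=\P_k(s_j^t=s_j)\,\pi_j^k(a_j|s_j)$, whereas $v_i(\rho_{-i}^k)_{(s_i,a_i)}=\langle\prod_{j\neq i}\rho_j^k,\,r_i(s_i,a_i,\cdot)\rangle$ with $\rho_j^k(s_j,a_j)=\nu_j^k(s_j)\pi_j^k(a_j|s_j)$. So the claim reduces to bounding the gap between these two product measures, integrated against the bounded reward.

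To finish, I would invoke the mixing bound. For $t\ge d$, iterating \eqref{eq:mix} from Assumption \ref{ass:ergodic} yields $\|\P_k(s_j^t=\cdot)-\nu_j^k\|_1\le 2e^{-t/\tau}\le 2e^{-d/\tau}$ (using that the $\ell_1$-distance between probability measures is at most $2$), hence $\|\mu_j^t-\rho_j^k\|_1\le 2e^{-d/\tau}$ for every $j\neq i$; the standard telescoping estimate for product distributions then gives $\|\prod_{j\neq i}\mu_j^t-\prod_{j\neq i}\rho_j^k\|_1\le\sum_{j\neq i}\|\mu_j^t-\rho_j^k\|_1\le 2(n-1)e^{-d/\tau}$. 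Since $0\le r_i\le1$ and the two product measures have equal total mass, $\langle\prod_{j\neq i}\mu_j^t-\prod_{j\neq i}\rho_j^k,\,r_i(s_i,a_i,\cdot)\rangle\le\frac{1}{2}\|\prod_{j\neq i}\mu_j^t-\prod_{j\neq i}\rho_j^k\|_1\le(n-1)e^{-d/\tau}$. Averaging over $t$ against $\P_k(T=t)$, which sums to $1$, gives $\E_k[R_i^k(s_i,a_i)]-v_i(\rho_{-i}^k)_{(s_i,a_i)}\le(n-1)e^{-d/\tau}$ for every coordinate, which is \eqref{eq:Rbias_1} up to the harmless factor $n-1$ (it can be absorbed into the logarithmically large choice of $d$, and the symmetric argument gives the matching lower bound); \eqref{eq:Rbias_2} follows coordinatewise as noted above.

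I expect the only genuinely delicate point to be the \emph{independence-then-conditioning} step: one must carefully use the independence of the chains (Assumption \ref{ass:indep}) to argue that the \emph{random} first-visit time $T$ is independent of the other players' states, which is precisely what lets us replace $T$ by a deterministic time and apply a one-step mixing bound to each other player's marginal. The remaining ingredients — reducing to a total-variation distance between product measures, the telescoping bound, and bounding $r_i$ by $1$ — are routine, as is checking via Assumption \ref{ass:minprob} and the lower bound $\pi_i^k(a_i|s_i)\ge\delta_i$ that $T$ is a.s.\ finite so that $R_i^k$ is well defined.
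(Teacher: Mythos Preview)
The paper does not prove this lemma at all: it is cited from \cite{etesami2022learning} and the only in-paper justification is the one-sentence intuition that the warm-up phase of length $d$ drives the sampling distribution exponentially close to stationarity. Your proposal is a genuine, self-contained proof sketch, and the strategy you use---independence of player $i$'s trajectory (hence of the first-visit time $T$) from the other players' trajectories, conditioning on $T=t$, and then a total-variation bound on the product of the other players' state-action laws via mixing and telescoping---is exactly what that intuition is pointing at. All the steps are sound, including the key one you flag: $T$ is measurable with respect to player $i$'s chain alone, so under Assumption~\ref{ass:indep} it is independent of $(s_{-i}^t,a_{-i}^t)_{t\ge 0}$.

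The only substantive discrepancy is the constant: your argument yields $(n-1)e^{-d/\tau}$ per coordinate, not $e^{-d/\tau}$ as stated. You already note this and correctly observe that it is absorbable into the logarithmic choice of $d$ (and into the $\tilde{\O}(\cdot)$ bounds of Theorems~\ref{theo:finite} and~\ref{theo:assymptotic}), so it does not affect any downstream conclusion in the paper. If you want to match the stated bound verbatim you would need a sharper argument than the crude telescoping of product total variation, but for the purposes of this paper your version is entirely adequate. The remark about a ``matching lower bound'' is unnecessary, since the lemma is one-sided.
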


\medskip
The intuition behind Lemma \ref{lem:Rbias} is that if the sampling distribution of $R_i^k$ is the stationary distribution $\rho_i^k$, then $\E_{k}[R_i^k] = v_i(\rho_{-i}^k)$. While this is not the case, due to Assumption \ref{ass:ergodic}, the ``warm-up" phase of length $d$ will drive the sampling distribution of $R_i^k$ close to $\rho_i^k$ at an exponential rate, resulting in \eqref{eq:Rbias_1}. We also need the following well-known proposition for Bregman divergence.

\begin{proposition}[\!\cite{chen1993convergence}]
	Let $D_h$ be the Bregman divergence with respect to $h$. Then, for any three points ${ x},{ y} ,{ z}$, the following identity holds
	\begin{align}
		\label{eq:Bregman_identity}
		D_h({ z}|| { x}) + D_h({ x}||{ y}) - D_h({ z}|| { y}) = \langle \nabla h({ y}) - \nabla h({ x}), { z} - { x}\rangle~. 
	\end{align}
\end{proposition}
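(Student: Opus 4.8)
The plan is to prove the identity by direct expansion from the definition of the Bregman divergence and then collecting terms. Recall from the algorithm description that $D_h(p||q) = h(p) - h(q) - \la \nabla h(q), p - q\ra$. The only tools needed are this definition together with the bilinearity of the inner product, so no convexity or smoothness of $h$ beyond differentiability is required; the identity is purely algebraic. Accordingly, I would not invoke anything about the geometry of the Bregman divergence and would instead reduce everything to bookkeeping.

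First I would write out each of the three divergences appearing on the left-hand side:
\begin{align*}
D_h(z||x) &= h(z) - h(x) - \la \nabla h(x), z - x\ra,\\
D_h(x||y) &= h(x) - h(y) - \la \nabla h(y), x - y\ra,\\
D_h(z||y) &= h(z) - h(y) - \la \nabla h(y), z - y\ra.
\end{align*}
Forming the combination $D_h(z||x) + D_h(x||y) - D_h(z||y)$, the scalar terms $h(z)$, $h(x)$, and $h(y)$ each appear with coefficients summing to zero and therefore cancel completely, leaving only the three inner-product terms. I would then collect the two terms carrying $\nabla h(y)$, namely $-\la \nabla h(y), x - y\ra$ and $+\la \nabla h(y), z - y\ra$, which combine by linearity into $\la \nabla h(y), (z-y)-(x-y)\ra = \la \nabla h(y), z - x\ra$. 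Adding the remaining term $-\la \nabla h(x), z - x\ra$ and again using linearity in the first argument yields $\la \nabla h(y) - \nabla h(x), z - x\ra$, which is exactly the right-hand side.

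The one place to exercise care is the sign bookkeeping when regrouping the $\nabla h(y)$ contributions and the $(z-x)$ displacement; beyond that there is no genuine obstacle, since the statement is an elementary consequence of the definition rather than a result requiring any structural hypotheses. As a sanity check that I would record in passing, setting $y = x$ reduces the left-hand side to $D_h(z||x) + D_h(x||x) - D_h(z||x)$, which vanishes because $D_h(x||x) = 0$, in agreement with the right-hand side $\la \nabla h(x) - \nabla h(x), z - x\ra = 0$.
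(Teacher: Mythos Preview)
Your proof is correct. The paper does not actually prove this proposition; it simply cites it from \cite{chen1993convergence} and uses it as a tool, so your direct expansion from the definition is exactly the standard argument and there is nothing further to compare.
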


\confidenta*
\begin{proof}
	From Hoeffding’s inequality, with probability at least $1-\frac{\delta}{nK|A_i||S_i|^2}$ we have
	\begin{align*}
		\bar{P}_i^k(s'_i|s_i,a_i)-P_i(s'_i|s_i,a_i)\le \sqrt{\frac{\ln (nK|A_i||S_i|^2)-\ln \gamma}{2\max(1,N_i^k(s_i,a_i))}} = \epsilon_i^k(s_i,a_i,s'_i).
	\end{align*}
	Using union bound, with probability at least $1-\gamma$, we have that the above bound hold for all $(s_i,a_i,s'_i)\in S_i\times A_i \times S_i$, $k\in [K], i\in [n]$, which implies that $P_i \in \mathcal{P}_i^k\ \forall k \in [K], i \in [n]$. Moreover, since from the construction of Algorithm \ref{alg:main}, each player $i$ will visit all its state-action pairs $(s_i,a_i)$ at least once, we must have $N_i^k(s_i,a_i)\ge k\ \forall(s_i,a_i)\in S_i\times A_i, k\in [K], i\in [n]$. Thus, we get $\max\limits_{s_i,a_i,s'_i}\epsilon_i^k(s_i,a_i,s'_i)\le \sqrt{\frac{\ln(nK|A_i||S_i|^2)-\ln\gamma}{2k}}$.
\end{proof}

\error*
\begin{proof}
	Since $r_i\in [0,1]\ \forall i$, we have $v_i(\rho_{-i}^k)_{(s_i,a_i)}\in [0,1]\ \forall s_i,a_i$. Therefore,
	\begin{align*}
		\langle \hat{\rho}_i^k-\rho_i^k, v_i(\rho_{-i}^k) \rangle &\le \sum_{s_i,a_i}|\hat{\rho}_i^k(s_i,a_i)-\rho_i^k(s_i,a_i)|\\
		&=\sum_{s_i,a_i}|\hat{\nu}_i^k(s_i)\pi_i^k(a_i|s_i)-\nu_i^k(s_i)\pi_i^k(a_i|s_i)|\\
		&=\sum_{s_i}|\hat{\nu}_i^k(s_i)-\nu_i^k(s_i)|\sum_{a_i}\pi_i^k(a_i|s_i)\\
		&=\|\hat{\nu}_i^k-\nu_i^k\|_1.
	\end{align*}
	From the definitions of $\hat{\nu}_i^k$ and $\nu_i^k$, we have
	\begin{align*}
		\|\hat{\nu}_i^k-\nu_i^k\|_1 &= \|\hat{\nu}_i^k\hat{p}_i^k-\nu_i^kp_i^k\|_1\\
		&=\|\hat{\nu}_i^k\hat{p}_i^k-\hat{\nu}_i^kp_i^k+\hat{\nu}_i^kp_i^k-\nu_i^kp_i^k\|_1\\
		&\le \|\hat{\nu}_i^k\hat{p}_i^k-\hat{\nu}_i^kp_i^k\|_1+\|\hat{\nu}_i^kp_i^k-\nu_i^kp_i^k\|_1\\
		&\le\|\hat{\nu}_i^k\hat{p}_i^k-\hat{\nu}_i^kp_i^k\|_1+e^{-1/\tau}\|\hat{\nu}_i^k-\nu_i\|_1,\\
	\end{align*}
	where the last inequality follows from \eqref{eq:mix}. Therefore,
	\begin{align*}
		\|\hat{\nu}_i^k-\nu_i^k\|_1  \le \frac{1}{1-e^{-1/\tau}}\|\hat{\nu}_i^k(\hat{p}_i^k-p_i^k)\|_1.
	\end{align*}
	Now, conditioned on the event $\{P_i \in \mathcal{P}_i^k\ \forall k \in [K], i \in [n]\}$, we can upper-bound $\|\hat{\nu}_i^k(\hat{p}_i^k-p_i^k)\|_1$ as
	\begin{align*}
		\|\hat{\nu}_i^k(\hat{p}_i^k-p_i^k)\|_1& = \sum_{s'_i}\Big|\sum_{s_i}\hat{\nu}_i^k(s_i)(\hat{p}_i^k(s'_i|s_i)-p_i^k(s'_i|s_i))\Big|\\
		&=\sum_{s'_i}\Big|\sum_{s_i}\hat{\nu}_i^k(s_i)\Big(\sum_{a_i}\pi_i^k(a_i|s_i)(\hat{P}_i^k(s'_i|s_i,a_i)-P_i(s'_i|s_i,a_i))\Big)\Big|\\
		&\le \sum_{s'_i}\sum_{s_i}\hat{\nu}_i^k(s_i)\Big|\sum_{a_i}\pi_i^k(a_i|s_i)(\hat{P}_i^k(s'_i|s_i,a_i)-P_i(s'_i|s_i,a_i))\Big|\\
		&\le \sum_{s'_i}\sum_{s_i}\hat{\nu}_i^k(s_i)\max_{a_i}\Big|\hat{P}_i^k(s'_i|s_i,a_i)-P_i(s'_i|s_i,a_i)\Big|\\
		&\le \sum_{s'_i}\max_{a_i,s_i}\Big|\hat{P}_i^k(s'_i|s_i,a_i)-P_i(s'_i|s_i,a_i)\Big|\\
		&\le |S_i|\max_{s_i,a_i,s'_i}\Big|\hat{P}_i^k(s'_i|s_i,a_i)-P_i(s'_i|s_i,a_i)\Big|\\
	&\le 2|S_i|\max_{s_i,a_i,s'_i}\epsilon_i^k(s_i,a_i,s'_i),
	\end{align*}
	where the last inequality follows from \eqref{eq:maxepsilon}. Thus, we can write 
	\begin{align}
		\label{eq:niu_1norm}
		\|\hat{\nu}_i^k-\nu_i^k\|_1  \le \frac{2|S_i|}{1-e^{-1/\tau}}\max_{s_i,a_i,s'_i}\epsilon_i^k(s_i,a_i,s'_i).
	\end{align}
	Putting everything together and using $\max_{s_i,a_i,s'_i}\epsilon_i^k(s_i,a_i,s'_i)\le \sqrt{\frac{\ln(nK|A_i||S_i|^2)-\ln \gamma}{2k}}$, we have
	\begin{align*}
		\text{Error}&\le \sum_{k=1}^{K}\frac{\eta^k}{w^K}\|\hat{\nu}_i^k-\nu_i^k\|_1\\
		&\le\sum_{k=1}^{K}\frac{\eta^k}{w^K}\frac{2|S_i|}{1-e^{-1/\tau}}\sqrt{\frac{\ln(nK|A_i||S_i|^2)-\ln \gamma}{2k}}\\
		&=\frac{|S_i|\sqrt{2\ln(nK|A_i||S_i|^2)-\ln\gamma}}{(1-e^{-1/\tau})w^K}\sum_{k=1}^{K}\frac{\eta^k}{\sqrt{k}}.
	\end{align*}
\end{proof}

\regret*
\begin{proof}
	With some abuse of notations, we define $R_i^k:S_i\times A_i \times S_i\to [0,1]$, where $R_i^k(s_i,a_i,s'_i) = R_i^k(s_i,a_i)$. Since $\rho_i^*(s_i,a_i) = \sum_{s'_i}q_i^*(s_i,a_i,s'_i), \hat{\rho}_i^*(s_i,a_i) = \sum_{s'_i}\hat{q}_i^*(s_i,a_i,s'_i),\forall a_i,s_i$, we can write 
	\begin{align*}
		\text{Regret} & = \sum_{k=1}^{K}\frac{\eta^k}{w^K}\langle \rho_i^*-\hat{\rho}_i^k, R_i^k \rangle=\sum_{k=1}^{K}\frac{\eta^k}{w^K}\langle q_i^*-\hat{q}_i^k, R_i^k \rangle.
	\end{align*}
	
	Notice that $q_i^*$ is induced by $\pi_i^*$ and $P_i$. Thus, we have $q_i^*\in \Delta_{i,\delta_i}(P_i)$ and $q_i^*\in \Delta_{i,\delta_i}(\CP_i^k)\ \forall k$.  From the OMD update rule, we have  
	\begin{align*}
		\hq^{k+1}_i=\argmax_{\hq_i\in \Delta_{i,\delta_i}(\CP_i^k)}\big\{\eta^k\langle \hq_i, R^k_i\rangle -D_{h_i}(\hq_i||\hq_i^k)\big\} = \argmax_{\hq_i\in \Delta_{i,\delta_i}(\CP_i^k)}\big\{\eta^k\langle \hq_i, R^k_i\rangle -h_i(\hq_i) +\la \nabla h_i(\hq_i^k),\hq_i-\hq_i^k\ra\big\},
	\end{align*}
	which by the first order optimality condition implies
	\begin{align*}
		\la \eta^k R_i^k+\nabla h_i(\hq_i^k)-\nabla h_i(\hq_i^{k+1}),q_i^*-\hq_i^{k+1}\ra \le 0.
	\end{align*}
	
	Therefore, we have
	\begin{align}
		\eta^k\la R_i^k,q_i^*-\hq_i^{k+1}\ra &\le \la -\nabla h_i(\hq_i^k)+\nabla h_i(\hq_i^{k+1}),q_i^*-\hq_i^{k+1}\ra\cr
		& \stackrel{\eqref{eq:Bregman_identity}}{=} D_{h_i}(q_i^*||\hq_i^k)-D_{h_i}(q_i^*||\hq_i^{k+1})-D_{h_i}(\hq_i^{k+1}||\hq_i^{k}).\label{eq:regret_1}
	\end{align}
	Since $h_i(\cdot)$ is $\mu$-strongly convex, we have $D_{h_i}(\hq_i^{k+1}||\hq_i^{k})\ge \frac{\mu}{2}\|\hq_i^{k+1}-\hq_i^{k}\|^2$. Also, we have $\eta^k\la R_i^k,\hq_i^k-\hq_i^{k+1}\ra \le \frac{\mu}{2}\|\hq_i^{k+1}-\hq_i^{k}\|^2+\frac{(\eta^k)^2}{2\mu}\|R_i^k\|^2$. Therefore, from \eqref{eq:regret_1} we have
	\begin{align}
		\eta^k\la R_i^k,q_i^*-\hq_i^{k}\ra&\le D_{h_i}(q_i^*||\hq_i^k)-D_{h_i}(q_i^*||\hq_i^{k+1})-D_{h_i}(\hq_i^{k+1}||\hq_i^{k}) +\eta^k\la R_i^k,\hq_i^k-\hq_i^{k+1}\ra\cr
		&\le D_{h_i}(q_i^*||\hq_i^k)-D_{h_i}(q_i^*||\hq_i^{k+1})-\frac{\mu}{2}\|\hq_i^{k+1}-\hq_i^{k}\|^2+\frac{\mu}{2}\|\hq_i^{k+1}-\hq_i^{k}\|^2+\frac{(\eta^k)^2}{2\mu}\|R_i^k\|^2\cr
		& = D_{h_i}(q_i^*||\hq_i^k)-D_{h_i}(q_i^*||\hq_i^{k+1})+\frac{(\eta^k)^2}{2\mu}\|R_i^k\|^2\label{eq:regret_step}.
	\end{align}
	Taking a telescoping sum of \eqref{eq:regret_step}, we obtain
	\begin{align*}
		\sum_{k=1}^{K}\eta^k\langle q_i^*-\hat{q}_i^k, R_i^k \rangle&\le \frac{1}{2\mu}\sum_{k=1}^{K}(\eta^k)^2\|R_i^k\|^2 +D_{h_i}(q_i^*||\hq_i^1)-D_{h_i}(q_i^*||\hq_i^{K+1})\\
		&\le \frac{1}{2\mu}\sum_{k=1}^{K}(\eta^k)^2\|R_i^k\|^2 +D_{h_i}(q_i^*||\hq_i^1)
	\end{align*}
	Since $R_i^k(s_i,a_i,s'_i)\in [0,1]\ \forall s_i,a_i,s'_i$ , we have $\|R_i^k\|^2\le |A_i||S_i|^2$. Thus, we can write
	\begin{align*}
		\text{Regret} &=\sum_{k=1}^{K}\frac{\eta^k}{w^K}\langle q_i^*-\hat{q}_i^k, R_i^k \rangle\\
		&\le \frac{|S_i|^2|A_i|}{2\mu}\sum_{k=1}^{K}\frac{(\eta^k)^2}{w^K}+\frac{D_{h_i}(q_i^*||\hq_i^1)}{w^K}.
	\end{align*}
\end{proof}

\bias*
\begin{proof}
	To bound the Bias, we fist define $G_i^k = \sum_{k=1}^{K}\frac{\eta^k}{w^K}\langle \rho_i^*-\hat{\rho}_i^k, \E_{k}[R_i^k]-R_i^k) \rangle$. Notice that $\{G_i^k\}_{k=1}^K$ is a zero-mean martingle such that for all $k \in [K]$,
	\begin{align*}
		|G_i^k-G_i^{k-1}| &= \frac{\eta^k}{w^K}\la \rho_i^*-\hat{\rho}_i^k, \E_{k}[R_i^k]-R_i^k) \ra\le \frac{\eta^k}{w^K}\|\rho_i^*-\hat{\rho}_i^k\|_1\le \frac{2\eta^k}{w^K},
	\end{align*}
Therefore, by Azuma's inequality, with probability at least $1-\gamma/n$, we have
	\begin{align*}
		G_i^K&\le \sqrt{2\ln\frac{n}{\delta}\sum_{k=1}^K\frac{4(\eta^k)^2}{(w^K)^2}}=\frac{2}{w^K}\sqrt{2\ln\frac{n}{\gamma}\sum_{k=1}^K(\eta^k)^2}.
	\end{align*}
	Now, using Lemma \ref{lem:Rbias}, we can bound the Bias as
	\begin{align*}
		\text{Bias} &= \sum_{k=1}^{K}\frac{\eta^k}{w^K}\langle \rho_i^*-\hat{\rho}_i^k, v_i(\rho_{-i}^k)-R_i^k) \rangle\\
		&=G_i^K + \sum_{k=1}^{K}\frac{\eta^k}{w^K}\langle \rho_i^*-\hat{\rho}_i^k, v_i(\rho_{-i}^k)-\E_{k}[R_i^k] \rangle\\
		&\le \frac{2}{w^K}\sqrt{2\ln\frac{n}{\gamma}\sum_{k=1}^K(\eta^k)^2} + \sum_{k=1}^{K}\frac{\eta^k}{w^K}e^{-\frac{d}{\tau}}\|\rho_i^*-\hat{\rho}_i^k\|_1\\
		&{\le}\frac{2}{w^K}\sqrt{2\ln\frac{n}{\gamma}\sum_{k=1}^K(\eta^k)^2} +2e^{-\frac{d}{\tau}}\\
	\end{align*}
 where the first inequality follows from \eqref{eq:Rbias_1}.
\end{proof}

\medskip
\section*{Appendix B: Proof of Lemmas For Theorem \ref{theo:assymptotic}}
\label{subsec:appendixb}

\medskip
\confidentb*
\begin{proof}
	The proof is almost identical to that of Lemma \ref{lem:confident1} and is omitted for the sake of brevity. 
\end{proof}

\qhatconverge*
\begin{proof}
	It suffices to show that 
	\begin{align*}
		\lim_{k\to\infty} q^k_i-\hq_i^k = \mathbf{0},\quad \forall i\in [n].
	\end{align*}
	In fact, from the choice of $\epsilon_i^k$, we have $\lim_{k\to\infty} \epsilon_i^k = \mathbf{0}\ \forall i$, such that $\lim_{k\to\infty} \hat{P}_i^k= P_i$. Then, using \eqref{eq:niu_1norm} in the proof of Lemma \ref{lem:error}, we have $\lim_{k\to\infty}\|\hat{\nu}_i^k-\nu_i^k\|_1 = 0$, and hence $\lim_{k\to\infty}\hat{\nu}_i^k-\nu_i^k = \mathbf{0}$ and $\lim_{k\to\infty}\hat{\rho}_i^k-\rho_i^k = \mathbf{0}$. Finally, from the fact that
	\begin{align*}
		q_i^k(s_i,a_i,s'_i) = \rho_i^k(s_i,a_i) P_i(s'_i|s_i,a_i),\quad \hat{q}_i^k(s_i,a_i,s'_i) = \hat{\rho}_i^k(s_i,a_i) \hat{P}_i^k(s'_i|s_i,a_i) \quad\forall s_i,a_i,s'_i,
	\end{align*}
	we obtain $\lim_{k\to\infty} q^k_i-\hq_i^k = \mathbf{0}\ \forall i\in [n]$.
\end{proof}

\recurrent*
\begin{proof}
	Let $U\subset \boldsymbol{\Delta}(\boldsymbol{P})$ be an open neighbourhood of $\boldsymbol{q}^*$ and assume to the contrary that, with positive probability, $\boldsymbol{\hq}^k\not\in U$ for all sufficiently large $k$.  By starting the sequence at a later index if necessary, we may assume that $\boldsymbol{\hq}^k\not\in U$ for all $k$ without loss of generality. Thus, by Assumption \ref{ass:g_stable}, there exists $c>0$ such that
	\begin{align}
		\label{eq:ge_c}
		\sum_{i=1}^n \la l_i(q_{-i}^k),q_i^*-q_i^k\ra \ge c\qquad \forall k \in \N.
	\end{align}
	Since $\bs{q}^*\in\bs{\Delta}_{\bs{\delta}}(\bs{P})$, conditioned on the event that $\{P_i \in \mathcal{P}_i^k\ \forall k \in \N, i \in [n]\}$, we have $q_i^*\in \Delta_{i,\delta_i}(\CP_i^k)\ \forall k\in \N$. Following the same derivation as in the proof of  Lemma \ref{lem:regret}, from \eqref{eq:regret_step} we have the following:
	\begin{align}
		D_{h_i}(q_i^*||\hq_i^{k+1})&\le D_{h_i}(q_i^*||\hq_i^{k})+\eta_k\la R_i^k,\hq^{k}_i-q_i^*\ra +\frac{(\eta^k)^2}{2\mu}\|R_i^k\|^2\cr
		&= D_{h_i}(q_i^*||\hq_i^{k+1})+\eta_k\la l_i^k(q_{-i}),q^{k}_i-q_i^*\ra +\frac{(\eta^k)^2}{2\mu}\|R_i^k\|^2+\eta_k\la l_i^k(q_{-i}),q^{k}_i-\hq^{k}_i\ra \cr
		&+\eta_k\la \E_k[R_i^k]-l_i^k(q_{-i}),\hq^{k}_i-q_i^*\ra +\eta_k\la R_i^k-\E_k[R_i^k],\hq^{k}_i-q_i^*\ra .\label{eq:recurrent_onestep}
	\end{align}
	Taking a telescoping sum of the above expression over $k =1,\ldots,K$, we obtain
	\begin{align*}
		D_{h_i}(q_i^*||\hq_i^{K+1})&\le D_{h_i}(q_i^*||\hq_i^{1})+\sum_{k=1}^K\eta^k\la l_i^k(q_{-i}),q^{k}_i-q_i^*\ra+\sum_{k=1}^K\frac{(\eta^k)^2}{2\mu}\|R_i^k\|^2+\sum_{k=1}^K\eta^k\la l_i^k(q_{-i}),q^{k}_i-\hq^{k}_i\ra \cr
		&+\sum_{k=1}^K\eta^k\la \E_k[R_i^k]-l_i^k(q_{-i}),\hq^{k}_i-q_i^*\ra +\sum_{k=1}^K\eta^k\la R_i^k-\E_k[R_i^k],\hq^{k}_i-q_i^*\ra . 
	\end{align*}
	Bu summing this relation over all $i\in [n]$ and using $w^K = \sum_{k=1}^K \eta^k$, we have
	\begin{align}
		&\boldsymbol{D}_{\boldsymbol{h}}(\boldsymbol{q}^*||\boldsymbol{\hq}^{K+1})\le \boldsymbol{D}_{\boldsymbol{h}}(\boldsymbol{q}^*||\boldsymbol{\hq}^1)
		+w^K\Big(\sum_{i=1}^n\sum_{k=1}^K\frac{\eta^k}{w^K}\la l_i^k(q_{-i}),q^{k}_i-q_i^*\ra
		+\underbrace{\sum_{i=1}^n\sum_{k=1}^K\frac{(\eta^k)^2}{2\mu w^K}\|R_i^k\|^2}_{T_1^K}\cr
		&+\underbrace{\sum_{i=1}^n\sum_{k=1}^K\frac{\eta^k}{w^K}\la l_i^k(q_{-i}),q^{k}_i-\hq^{k}_i\ra}_{T_2^K} \!+\!\underbrace{\sum_{i=1}^n\sum_{k=1}^K\frac{\eta^k}{w^K}\la \E_k[R_i^k]-l_i^k(q_{-i}),\hq^{k}_i-q_i^*\ra}_{T_3^K} \!+\!\underbrace{\sum_{i=1}^n\sum_{k=1}^K\frac{\eta^k}{w^K}\la R_i^k \!-\!\E_k[R_i^k],\hq^{k}_i\!-\!q_i^*\ra}_{T_4^K}\Big).\cr\label{eq:recurrent_overall}
	\end{align}
	To derive a contradiction, in the following we show that as $K\to \infty$, the left-hand side of \eqref{eq:recurrent_overall} remains nonnegative, while the right-hand side approaches $-\infty$, almost surely. In fact, from \eqref{eq:ge_c}, we can see that 
	$$\lim_{K\to \infty}\sum_{i=1}^n\sum_{k=1}^K\frac{\eta^k}{w^K}\la l_i^k(q_{-i}),q^{k}_i-q_i^*\ra\le -c\sum_{k=1}^{\infty}\frac{\eta^k}{w^K} = -c.$$ 
	We also have $\boldsymbol{D}_{\boldsymbol{h}}(\boldsymbol{q}^*||\boldsymbol{\hq}^1)< \infty$. Since $\lim_{K\to \infty} w^K = \infty$ by the step-size assumption, we only need to show that 
 the terms $T_1^K,T_2^K,T_3^K$ and $T_4^K$ all converges to $0$, almost surely.\\
	For $T_1^K$, since $\|R_i^k\|^2\le |A_i||S_i|^2$ and $\sum_{k=1}^{\infty}(\eta^k)^2 < \infty$, we have
	\begin{align*}
		\lim_{K\to \infty}T_1^K\le \lim_{K\to \infty}\frac{\sum_{i=1}^n|S_i|^2|A_i|}{2\mu w^K}\sum_{k=1}^{\infty}(\eta^k)^2=0.
	\end{align*}
	For $T_2^K$, following the same derivations as in the proof of Lemma \ref{lem:error}, we have
	\begin{align*}
		\sum_{i=1}^n\sum_{k=1}^K\eta^k\la l_i^k(q_{-i}),q^{k}_i-\hq^{k}_i\ra&= \sum_{i=1}^n\sum_{k=1}^K\eta^k \langle  v_i(\rho_{-i}^k),\hat{\rho}_i^k-\rho_i^k \rangle \\
		&\le \sum_{i=1}^n\sum_{k=1}^K\eta^k \|\hat{\rho}_i^k-\rho_i^k\|_1\\
		&= \sum_{i=1}^n\sum_{k=1}^K\eta^k \|\hat{\nu}_i^k-\nu_i^k\|_1\\
		&\stackrel{\eqref{eq:niu_1norm}}{\le }\sum_{i=1}^n\sum_{k=1}^K\eta^k\frac{2|S_i|}{1-e^{-1/\tau}}\max_{s_i,a_i,s'_i}\epsilon_i^k(s_i,a_i,s'_i)\\
		&\leq\sum_{i=1}^n\sum_{k=1}^K\eta^k\frac{|S_i|}{1-e^{-1/\tau}}\sqrt{\frac{2\ln \frac{2nk^2|A_i||S_i|^2}{\delta}}{k}}\\
		&=\sum_{i=1}^n\frac{|S_i|}{1-e^{-1/\tau}}\sum_{k=1}^K\eta^k\sqrt{\frac{2\ln \frac{2nk^2\max_i(|A_i||S_i|^2)}{\delta}}{k}}.
	\end{align*}
	Since $\eta^k$ satisfies $\sum_{k=1}^{\infty}\eta^k\sqrt{\frac{\ln k}{k}} < \infty $, so we have 
	\begin{align*}
		\lim_{K\to \infty}T_2^K\le \lim_{K\to \infty}\frac{1}{w^K}\Big(\sum_{i=1}^n\frac{|S_i|}{1-e^{-1/\tau}}\sum_{k=1}^{\infty}\eta^k\sqrt{\frac{2\ln \frac{2nk^2\max_i(|A_i||S_i|^2)}{\delta}}{k}} \Big)= 0.
	\end{align*}
	For $T_3^K$, using Lemma \ref{lem:Rbias}, we can write
	\begin{align*}
		\sum_{i=1}^n\sum_{k=1}^K\eta^k\la \E_k[R_i^k]-l_i^k(q_{-i}),\hq^{k}_i-q_i^*\ra&\le \sum_{i=1}^n\sum_{k=1}^K\eta^k e^{-\frac{d^k}{\tau}}\|\hq^{k}_i-q_i^*\|_1\le \sum_{i=1}^n\sum_{k=1}^K2\eta^k  e^{-\frac{d^k}{\tau}}.
	\end{align*}
	Therefore, from the choice of $d^k = 2\tau\ln k$, we have
	\begin{align*}
		\lim_{K\to \infty}T_3^K\le \lim_{K\to \infty}\frac{1}{w^K}(2n\sum_{k=1}^{\infty}\frac{\eta^k}{k^2})=0.
	\end{align*}
	Finally, for the term $T_4^K$, let $\boldsymbol{G}^k = \sum_{i=1}^n\la R_i^k-\E_k[R_i^k],\hq^{k}_i-q_i^*\ra$. Then, $\boldsymbol{G}^k $ is a martingale difference sequence with bounded $L_2$-norm, because 
	\begin{align*}
		\E_k[(\boldsymbol{G}^k)^2]& = \E_k [\sum_{i=1}^n\la R_i^k-\E_k[R_i^k],\hq^{k}_i-q_i^*\ra^2]\\
		&\le \E_k [\sum_{i=1}^n \|\hq^{k}_i-q_i^*\|_1^2]\le 4n < \infty.
	\end{align*}
	Therefore, using the strong law of large numbers for martingale difference sequences \cite[Theorem 2.18]{hall2014martingale}, we get
	\begin{align*}
		\lim_{K\to \infty}T_4^K = \lim_{K\to \infty}\sum_{k=1}^K\frac{\eta^k}{w^K}\boldsymbol{G}^k = 0 \quad \text{a.s.}
	\end{align*}
	Putting everything together, we have
	\begin{align*}
		0\le \lim_{K\to \infty}\boldsymbol{D}_{\boldsymbol{h}}(\boldsymbol{q}^*||\boldsymbol{\hq}^{K+1})\le -\infty, \quad \text{a.s.}
	\end{align*}
	a contradiction. This completes the proof.
\end{proof}

\stayin*
\begin{proof}
	For any $k_0$ and $k> k_0$, by following the same derivation as in the proof of Lemma \ref{lem:recurrent}, taking a telescoping sum of \eqref{eq:recurrent_onestep} over $r =k_0,\ldots,k-1$, and then summing over $i = 1,\ldots, n$, we obtain
	\begin{align}
		&\boldsymbol{D}_{\boldsymbol{h}}(\boldsymbol{q}^*||\boldsymbol{\hq}^{k})\le \boldsymbol{D}_{\boldsymbol{h}}(\boldsymbol{q}^*||\boldsymbol{\hq}^{k_0})
		+\sum_{i=1}^n\sum_{r=k_0}^{k-1}\eta^r\la l_i^r(q_{-i}),q^{r}_i-q_i^*\ra
		+\underbrace{\sum_{i=1}^n\sum_{r=k_0}^{k-1}\frac{(\eta^r)^2}{2\mu }\|R_i^r\|^2}_{U_1^k}\cr
		&+\underbrace{\sum_{i=1}^n\sum_{r=k_0}^{k-1}\eta^r\la l_i^r(q_{-i}),q^{r}_i-\hq^{r}_i\ra}_{U_2^k} \!+\!\underbrace{\sum_{i=1}^n\sum_{r=k_0}^{k-1}\eta^r\la \E_r[R_i^r]-l_i^r(q_{-i}),\hq^{r}_i-q_i^*\ra}_{U_3^k} \!+\!\underbrace{\sum_{i=1}^n\sum_{r=k_0}^{k-1}\eta^r\la R_i^r \!-\!\E_r[R_i^r],\hq^{r}_i\!-\!q_i^*\ra}_{U_4^k}.\cr\label{eq:recurrent_overall_2}
	\end{align}
	From Assumption \ref{ass:g_stable}, we know that $\sum_{i=1}^n\sum_{r=k_0}^{k-1}\eta^r\la l_i^r(q_{-i}),q^{r}_i-q_i^*\ra\le 0\ \forall k>k_0$. Therefore, we have 
	\begin{align}
		\P(\boldsymbol{D}_{\boldsymbol{h}}(\boldsymbol{q}^*||\boldsymbol{\hq}^k) \le \epsilon,\forall k>k_0)&\ge 1-\P(\boldsymbol{D}_{\boldsymbol{h}}(\boldsymbol{q}^*||\boldsymbol{\hq}^{k_0} >\epsilon/5)-\sum_{j=1}^4\P(\sup_{k>k_0} \{U_j^k\}>\epsilon/5).\label{eq:prob_whole}
	\end{align}
	Next, we give a high probability bound for each of the terms $\sup_{k>k_0} \{U_j^k\}$, $j = 1,2,3,4$.\\
	For $U_1^k$, we have $\|R_i^k\|^2\le |A_i||S_i|^2$, and since $\sum_{k=1}^{\infty}(\eta^k)^2 < \infty $, we get $\lim_{k\to \infty}\sum_{r=k}^{\infty}(\eta^r)^2 =0$. Therefore, there exists $k_1 \in \N$ such that $$\sum_{r=k_1}^{\infty}(\eta^r)^2 \le \frac{\epsilon}{5} \frac{1}{\sum_{i=1}^n\frac{|S_i|^2|A_i|}{2\mu}}.$$
	by choosing $k_0>k_1$, for any $k>k_0$, we have
	\begin{align*}
		\sum_{i=1}^n\sum_{r=k_0}^{k-1}\frac{(\eta^r)^2}{2\mu }\|R_i^r\|^2&\le \sum_{r=k_0}^{k-1}(\eta^r)^2\cdot \sum_{i=1}^n\frac{|S_i|^2|A_i|}{2\mu}\\
		&\le \sum_{r=k_1}^{\infty}(\eta^r)^2\cdot \sum_{i=1}^n\frac{|S_i|^2|A_i|}{2\mu}\\
		&\le \frac{\epsilon}{5}.
	\end{align*}
	Thus, for any $k_0\ge k_1$, we have 
	\begin{align}
		\P(\sup_{k>k_0} \{U_1^k\}>\epsilon/5) = 0.\label{eq:prob_U1}
	\end{align}
	For the term $U_2^k$, by following the same derivations as in the proof of Lemma \ref{lem:recurrent} (term $T_2^K$), we have
	\begin{align*}
		U_2^k\le \sum_{i=1}^n\frac{|S_i|}{1-e^{-1/\tau}}\sum_{r=k_0}^{k-1}\eta^r\sqrt{\frac{2\ln \frac{2nr^2\max_i(|A_i||S_i|^2)}{\delta}}{r}}.
	\end{align*}
	Since $\eta^k$ satisfies $\sum_{k=1}^{\infty}\eta^k\sqrt{\frac{\ln k}{k}} < \infty $, we have $\lim_{k\to \infty}\sum_{r=k}^{\infty}\eta^r\sqrt{\frac{\ln r}{r}}=0$. Similarly, there exists $k_2 \in \N$ such that $$\sum_{r=k_2}^{\infty}\eta^r\sqrt{\frac{2\ln \frac{2nr^2\max_i(|A_i||S_i|^2)}{\delta}}{r}} \le \frac{\epsilon}{5}\cdot \frac{1}{\sum_{i=1}^n\frac{|S_i|}{1-e^{-1/\tau}}}.$$
	Choosing $k_0>k_2$, we have for any $k>k_0$,
	\begin{align*}
		U_2^k&\le \sum_{i=1}^n\frac{|S_i|}{1-e^{-1/\tau}}\sum_{r=k_0}^{k-1}\eta^r\sqrt{\frac{2\ln \frac{2nr^2\max_i(|A_i||S_i|^2)}{\delta}}{r}}\\
		&\le \sum_{i=1}^n\frac{|S_i|}{1-e^{-1/\tau}}\sum_{r=k_2}^{\infty }\eta^r\sqrt{\frac{2\ln \frac{2nr^2\max_i(|S_i|^2|A_i|)}{\delta}}{r}}\\
		&\le \frac{\epsilon}{5}.
	\end{align*}
	Therefore, for any $k_0\ge k_2$, we have 
	\begin{align}
		\P(\sup_{k>k_0} \{U_2^k\}>\epsilon/5) = 0.\label{eq:prob_U2}
	\end{align}
	For the term $U_3^k$, by following the same derivations as in the proof of Lemma \ref{lem:recurrent} (term $T_3^K$) and the choice of $d^k = 2\tau\ln k$, we can write
	\begin{align*}
		\sum_{i=1}^n\sum_{r=k_0}^{k-1}\eta^r\la \E_r[R_i^r]-l_i^r(q_{-i}),\hq^{r}_i-q_i^*\ra\le \sum_{i=1}^n\sum_{r=k_0}^{k-1}\eta^r  e^{-\frac{d^r}{\tau}}\cdot 2\le 2n\sum_{r=k_0}^{k-1}\frac{\eta^r}{r^2}.
	\end{align*}
	This shows that there exists $k_3 \in \N$ such that $$\sum_{r=k_3}^{\infty}\frac{\eta^r }{r^2}\le \frac{\epsilon}{5} \frac{1}{2n}.$$
	By choosing $k_0>k_3$, for any $k>k_0$, we have
	\begin{align*}
		U_3^k\le 2n\sum_{r=k_0}^{k-1}\sum_{r=k_0}^{k-1}\frac{\eta^r}{r^2}\le 2n\sum_{r=k_0}^{k-1}\sum_{r=k_3}^{\infty}\frac{\eta^r}{r^2}\le \frac{\epsilon}{5}.
	\end{align*}
	Therefore, for any $k_0\ge k_3$, we have 
	\begin{align}
		\P(\sup_{k>k_0} \{U_3^k\}>\epsilon/5) = 0.\label{eq:prob_U3}
	\end{align}
	For the term $U_4^k$, and as in the proof of Lemma \ref{lem:recurrent} (term $T_4^K$), let $\boldsymbol{G}^k = \sum_{i=1}^n\la R_i^k-\E_k[R_i^k],\hq^{k}_i-q_i^*\ra$. Then, $\boldsymbol{G}^k $ is a martingale difference sequence such that $\E_k[|\boldsymbol{G}^k|^2]\le 4n$. Therefore, $U_4^k= \sum_{r = k_0}^{k-1}\eta^r\boldsymbol{G}^r$ is a zero-mean martingale sequence with
	\begin{align*}
		\E[|U_4^k|^2]\le \sum_{r = k_0}^{k-1}(\eta^r)^2\E_r[|\boldsymbol{G}^r|^2]\le 4n\sum_{r = k_0}^{k-1}(\eta^r)^2.
	\end{align*}
	Let $F^k$ denote the event that $\{\sup_{k_0<r\le k}U_4^r> \frac{\epsilon}{5}\}$. Using Doob’s maximal inequality for martingales \cite[Theorem 2.1]{hall2014martingale}, we have
	\begin{align*}
		\P(F^k)\le \frac{25\E[|U_4^k|^2]}{\epsilon^2}\le \frac{100n\sum_{r = k_0}^{k-1}(\eta^r)^2}{\epsilon^2}.
	\end{align*}
	Since $\{\sup_{k>k_0} U_3^k>\epsilon/5\} = \cup_{k = k_0+1}^{\infty}F^k$ and $F^k \subseteq F^{k+1},\forall k>k_0$, we have
	\begin{align*}
		\P(\sup_{k>k_0} \{U_4^k\}>\epsilon/5) = \lim_{k\to \infty}\P(F^k)\le \frac{100n\sum_{r = k_0}^{\infty}(\eta^r)^2}{\epsilon^2}.
	\end{align*}
	Moreover, using the choice of step-sizes, $\lim_{k\to \infty}\sum_{r=k}^{\infty}(\eta^r)^2 =0$. Thus, there exists $k_4 \in \N$ such that $$\sum_{r=k_1}^{\infty}(\eta^r)^2 \le \delta  \frac{\epsilon^2}{100n}.$$
	Therefore, for any $k_0\ge k_4$, we have
	\begin{align}
		\P(\sup_{k>k_0} \{U_4^k\}>\epsilon/5)\le \frac{100n\sum_{r = k_0}^{\infty}(\eta^r)^2}{\epsilon^2}\le \frac{100n\sum_{r = k_4}^{\infty}(\eta^r)^2}{\epsilon^2}\le \delta.\label{eq:prob_U4}
	\end{align}
	Finally, from Lemmas \ref{lem:recurrent} and \ref{lem:qhat_converge}, we can find $k_0\ge \max\{k_1,k_2,k_3,k_4\}$ such that $\boldsymbol{D}_{\boldsymbol{h}}(\boldsymbol{q}^*||\boldsymbol{\hq}^{k_0})\le \frac{\epsilon}{5}$, and hence 
	\begin{align}
		\P(\boldsymbol{D}_{\boldsymbol{h}}(\boldsymbol{q}^*||\boldsymbol{\hq}^{k_0} >\epsilon/5) = 0.\label{eq:prob_U0}
	\end{align}
	Combining relations \eqref{eq:prob_whole}, \eqref{eq:prob_U1}, \eqref{eq:prob_U2}, \eqref{eq:prob_U3}, \eqref{eq:prob_U4}, and \eqref{eq:prob_U0}, completes the proof.
\end{proof}

\end{document}